\tikzstyle{int}=[draw, fill=white, minimum size=2em]
\tikzstyle{init} = [pin edge={<-,thin,black}]
\tikzstyle{line} = [pin edge={-,thin,black}]
\tikzstyle{fran} = [pin edge={->,thin,black}]
\tikzstyle{Idea}=[draw, fill=lightgray,  fill opacity=0.75, text opacity = 1, minimum size=2em]
\theoremstyle{definition}
\newtheorem{definition}{Definition}
\newtheorem{lemma}{Lemma}
\theoremstyle{plain}
\newtheorem{theorem}{Theorem}
\newtheorem{proposition}{Proposition}
\theoremstyle{remark}
\newtheorem*{remark}{Remark}
\DeclareMathOperator{\sinc}{sinc}
\DeclareMathOperator{\diag}{diag}
\DeclareMathOperator{\rect}{rect}
\DeclareMathOperator{\Tr}{Tr}
\newcommand{\R}{\mathbb{R}}
\newcommand{\C}{\mathbb{C}}
\newcommand{\N}{\mathbb{N}}
\newcommand{\X}{\mathbf{X}}
\newcommand{\E}{\mathbf{E}}
\newcommand{\gv}{\mathcal{N}}
\newcommand{\lp}{\left(}
\newcommand{\rp}{\right)}
\newcommand{\intinf}{\int_{-\infty}^\infty}
\newcommand{\dd}{\,\mathrm{d}} 
\newcommand{\Ltwo}{\mathcal{L}_2(\R)} 
\newcommand{\Lone}{\mathcal{L}_1([-\pi,\, \pi])} 
\title{Time Localization and Capacity of Faster-Than-Nyquist Signaling}
\date{}
\author{
\IEEEauthorblockN{Ather Gattami}
\IEEEauthorblockA{Bitynamics Research\\ Stockholm, Sweden\\
Email: atherg@gmail.com}
\and
\IEEEauthorblockN{Emil Ringh}
\IEEEauthorblockA{Ericsson Research\\ Stockholm, Sweden\\
Email: emil.ringh@ericsson.com}
\and
\IEEEauthorblockN{Johan Karlsson}
\IEEEauthorblockA{KTH Royal Institute of Technology\\ Stockholm, Sweden\\
Email: johan.karlsson@math.kth.se}
}
\begin{document}
\selectlanguage{english}
\maketitle

\begin{abstract}
In this paper, we consider communication over the bandwidth limited analog white Gaussian noise channel using \textit{non-orthogonal} pulses. 
In particular, we consider non-orthogonal transmission by signaling samples at a rate higher than the Nyquist rate. 
Using the faster-than-Nyquist (FTN) framework, Mazo showed that one may transmit symbols carried by $\sinc$ pulses at a higher rate than that dictated by Nyquist without loosing bit error rate. However, as we will show in this paper, such pulses are not necessarily well localized in time. In fact, assuming that signals in the FTN framework are well localized in time, one can construct a signaling scheme that violates the Shannon capacity bound. We also show directly that FTN signals are in general not well localized in time. Therefore, the results of Mazo do not imply that one can transmit more data \textit{per time unit} without degrading performance in terms of error probability.

We also consider FTN signaling in the case of pulses that are different from the $\sinc$ pulses. We show that one can use a precoding scheme of low complexity to remove the inter-symbol interference. This leads to the possibility of increasing the number of transmitted samples per time unit and compensate for spectral inefficiency due to signaling at the Nyquist rate of the non $\sinc$ pulses. 
We demonstrate the power of the precoding scheme by simulations. 
\end{abstract}

\section{Introduction}

\subsection{Background and previous work}
In 1949, Shannon \cite{shannon1949} presented his famous result on the capacity of the additive white Gaussian noise (AWGN) channel for band limited signals. The result is based on communication using orthogonal pulses (to avoid inter-symbol interference) which corresponds to transmission at the Nyquist sampling rate.

In \cite{mazo:FTN}, Mazo considered the \textit{uncoded} transmission case for binary symbols, where the sample rate is faster than the one dictated by the Nyquist sampling theorem, so called \textit{faster-than-Nyquist} (FTN) signaling. The minimum Euclidean distance between pairs of binary signals was used as a performance measure, and it was shown that one can transmit the signal pulses faster than the Nyquist frequency without decreasing the Euclidian distance between any two signals. The limit for such rate increase is known as the Mazo limit: $\rho \approx 0.802$,  and was derived in \cite{mazo:min_dist, hajela:on_computing}. Recently, a series of work has explored FTN signaling, see \cite{rusek:FTN}  and the references therein. 
This work has also been extended to two dimensions, the second dimension being the frequency domain \cite{rusek:two_dim}. The signals are packed tighter also in the frequency domain, possibly introducing interference between previously uncorrelated subcarriers. However, this inter-carrier interference does not affect the reliability of the signaling with a certain packing density, and with the use of an optimal detector the error rates would remain unchanged. Also, in \cite{rusek:2009} it was shown that FTN can achieve the maximum capacity for a given pulse (as opposed to orthogonal transmission). This work also considers the use of FTN for achieving higher capacity for non-$\sinc$ pulses when the code alphabet is finite.

The original derivation of Shannon's capacity formula \cite{shannon1949} is based on transmission of uncorrelated $\sinc$ pulses.  These signals have infinite support in the time domain and hence, as pointed out by Wyner in \cite{wyner1966capacity},  ``the idea of transmission {\em rate} [using band limited pulses] has, at best, a limited meaning.''  To treat this problem in a rigorous manner, Wyner proposed several physically consistent models with corresponding coding theorems, thereby justifying Shannon's capacity formula \cite{wyner1966capacity}. Also, in the FTN framework, the concept of transmission rate is problematic and there is no guarantee that a signal consisting of a linear combination of pulses centered at a given time interval has its energy localized in the vicinity of that interval. In fact, as we will see in Section \ref{sec:timeloc}, one can easily construct examples where this is not the case. Another problem in the FTN framework is that the algorithms used for detection and estimation suffers from high complexity, rendering NP-hard problems in general \cite{verdu,proakis}.

\subsection{Contribution}
We consider communication over the bandwidth limited analog Gaussian white noise channel using
\textit{non-orthogonal} pulses. 
In particular, we consider non-orthogonal transmission by signaling samples at a rate higher than the Nyquist rate. 
Using the faster-than-Nyquist (FTN) framework, Mazo showed that one may transmit symbols carried by $\sinc$ pulses at a higher rate than that dictated by Nyquist without loosing bit error rate. However, we show in this paper that such pulses are not necessarily well localized in time. In fact, assuming that signals in the FTN framework are well localized in time, we show that one can construct a signaling scheme that violates the Shannon capacity bound. We also show directly that FTN signals are in general not well localized in time. Thus, it's not physically correct to talk about bits per \textit{second}, as the energy of non-orthogonal signals may not be well localised in time, as opposed to orthogonal $\sinc$ pulses, for instance. 
Therefore, the results of Mazo \cite{mazo:FTN} do not imply that one can transmit more data \textit{per time unit} without degrading performance in terms of error probability.

We go on and consider FTN signaling in the case of pulses that are different from the $\sinc$ pulses. We show that one may use a precoding scheme of low complexity in order
to remove the inter-symbol interference. This leads to the possibility of increasing the number of transmitted samples per time unit (keeping average energy per time unit bounded by some constant) and compensating for spectral efficiency losses due to signaling at the Nyquist rate, and so, achieve the Shannon capacity when the same energy is spent for transmitting with the ideal $\sinc$ pulses. We demonstrate the power of the precoding scheme by simulations.


\section{Preliminaries and Properties of Non-Orthogonal Pulses}

To start, we introduce some basic notation. Let $\N$ denote the set of natural numbers $\{0, 1, 2, ...\}$, 
$\R$ the set of real numbers, and
$\C$ the set of complex numbers. The $n\times n$ identity matrix is denoted by
$I_n$.
$X \sim \gv(0, \X)$ denotes that $X$ is a Gaussian variable with
$\E\{X\} = 0$ and $\E\{XX^\intercal\} = \X$. We use $ \X\succ 0$ ($ \X\succeq 0$) to denote that the matrix $\X$ is positive definite (semidefinite). $\X^\intercal$ is the transpose of a matrix and $\X^*$ is the conjugate transpose. The empty set is denoted by $\emptyset$
and $\mathbf{int} (\Omega)$ denotes the interior of the set $\Omega$.
The floor function $\lfloor x \rfloor$ denotes the greatest integer less than or equal to $x$. 
We define the normalized $\sinc$ function as
$$
\sinc(x) = \frac{\sin{\pi x}}{\pi x}.
$$
We also define the function
\begin{equation*}
	\text{rect}(x) = \left\{
		\begin{aligned}
			1 &~~~ |x| \leq \frac{1}{2} \\
			0 &~~~ |x| > \frac{1}{2}.
		\end{aligned} \right.
\end{equation*}


\begin{definition}[Signal Space]
Let $\Ltwo$ be the Hilbert space of functions $f:\R \rightarrow \C$ that are square integrable, endowed with the scalar product
$$
\langle f, g\rangle = \int_{-\infty}^\infty f(t)\overline{g(t)} dt
$$
and squared norm
$$
\|f\|^2 = \int_{-\infty}^\infty |f(t)|^2 dt < \infty.
$$
\end{definition}

\begin{definition}[Fourier Transform]
The Fourier transform of a function $f\in \Ltwo$ is given by
$$
f(t) \xrightarrow{\mathcal{F}}  F(\omega) = \intinf f(t) e^{-i\omega t} dt.
$$
\end{definition}

\begin{definition}
$\Lone$ is the Banach space of functions $f:\R \rightarrow \C$ that are absolutely integrable, with the norm
$$
 \|f\| = \int_{-\pi}^\pi |f(t)| dt < \infty  .
$$
\end{definition}

\begin{definition}[Gaussian White Noise Process \cite{gallager}]
The stochastic process $Z(t)$ is a white Gaussian noise process if $Z(t)$ has a constant spectral
density over all frequencies. 
\end{definition}

\begin{proposition}
\label{gv}
Let $Z(t)$ be a zero mean white Gaussian noise process with spectral density $\frac{N_0}{2}$. 
For $g_k, g_\ell\in \Ltwo$, define
$$
Z_k = \langle Z, g_k \rangle = \int_{-\infty}^\infty Z(t)\overline{g_k(t)} dt
$$  
and
$$
Z_\ell = \langle Z, g_\ell \rangle = \int_{-\infty}^\infty Z(t)\overline{g_\ell(t)} dt.
$$  
Then, $Z_k$ and $Z_\ell$ are zero mean Gaussian variables with covariance 
$$
\E\{Z_k \bar{Z}_\ell\} = \frac{N_0}{2} \langle g_k, g_\ell \rangle = \frac{N_0}{2} \int_{-\infty}^\infty g_k(t) \overline{g_\ell(t)}  dt.
$$
\end{proposition}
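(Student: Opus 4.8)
The plan is to exploit the defining structure of white Gaussian noise: the assumption that $Z(t)$ has constant spectral density $\frac{N_0}{2}$ over all frequencies is equivalent, in the generalized-process sense, to the autocorrelation $\E\{Z(t)\overline{Z(s)}\} = \frac{N_0}{2}\,\delta(t-s)$, and the action $g\mapsto \langle Z,g\rangle$ on test functions $g\in\Ltwo$ is linear. With these two facts the three claims (joint Gaussianity, zero mean, and the covariance formula) follow in sequence.

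First I would establish Gaussianity and the zero-mean property. Since $Z_k=\langle Z,g_k\rangle$ is obtained as a continuous linear functional of the Gaussian process $Z$, the pair $(Z_k,Z_\ell)$ is jointly Gaussian, because any finite linear combination of jointly Gaussian quantities is Gaussian and this property is preserved under the $\Ltwo(\Omega)$-limits used to define the integral. The zero-mean claim follows by interchanging expectation and integration, $\E\{Z_k\}=\intinf \E\{Z(t)\}\,\overline{g_k(t)}\dd t = 0$, which holds since $Z$ is zero mean.

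Next I would compute the covariance. Writing the product as a double integral and moving the expectation inside (justified by Fubini once the integrals are interpreted rigorously), one obtains
$$
\E\{Z_k \bar{Z}_\ell\} = \intinf\!\!\intinf \E\{Z(t)\overline{Z(s)}\}\,\overline{g_k(t)}\,g_\ell(s)\dd t\dd s = \frac{N_0}{2}\intinf g_k(t)\overline{g_\ell(t)}\dd t,
$$
where the delta autocorrelation collapses the double integral to the single integral defining $\frac{N_0}{2}\langle g_k,g_\ell\rangle$.

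The main obstacle is not this algebra but giving the integrals $\langle Z,g\rangle$ a rigorous meaning, since white noise is not an ordinary function in $\Ltwo$. The clean route is to define $g\mapsto\langle Z,g\rangle$ as a Wiener integral: for a simple function $g$ (a finite linear combination of indicators of disjoint intervals) the quantity $\langle Z,g\rangle$ is a finite sum of independent zero-mean Gaussian increments with variance proportional to $\frac{N_0}{2}$ times the interval lengths, so the covariance identity holds by direct computation and independence. One then extends the identity to all $g_k,g_\ell\in\Ltwo$ by the isometry property together with the density of simple functions in $\Ltwo$, noting that Gaussianity and the zero-mean property survive the $\Ltwo(\Omega)$-limit. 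Thus the delicate point is justifying the limiting/interchange arguments, after which the covariance formula is immediate.
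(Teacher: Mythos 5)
The paper contains no argument of its own for Proposition~\ref{gv}: its ``proof'' is a pointer to Gallager, so there is nothing in-paper to compare step by step. Your sketch is correct and is essentially the standard construction that such references carry out: interpret $g\mapsto\langle Z,g\rangle$ as a Wiener integral, verify the zero-mean, Gaussianity, and covariance claims directly on simple functions (where $\langle Z,g\rangle$ is a finite sum of independent zero-mean Gaussian increments and the covariance identity is a direct computation), and then extend to all of $\Ltwo$ via the Wiener isometry together with the facts that Gaussianity, zero mean, and second moments are stable under $\mathcal{L}_2(\Omega)$ limits. You also correctly identify where the real work lies: the delta autocorrelation $\E\{Z(t)\overline{Z(s)}\}=\frac{N_0}{2}\,\delta(t-s)$ and the Fubini interchange are heuristics whose rigorous content is exactly the isometry argument, so your proposal supplies more than the paper does. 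One cosmetic inconsistency worth fixing: with the paper's convention $Z_k=\intinf Z(t)\overline{g_k(t)}\dd t$, your double integral actually collapses to $\frac{N_0}{2}\intinf \overline{g_k(t)}\,g_\ell(t)\dd t=\frac{N_0}{2}\langle g_\ell,g_k\rangle$, which is the complex conjugate of the stated $\frac{N_0}{2}\langle g_k,g_\ell\rangle$; your displayed chain equates the two without comment. The discrepancy vanishes for real-valued test functions, which is the only case the paper ever uses (and the proposition as printed is sloppy on the same point), but for complex $g_k,g_\ell$ you should carry the conjugates through consistently.
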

\begin{proof}
See \cite{gallager}.
\end{proof}

\begin{proposition}[Shannon Capacity of the Discrete Memory-less Gaussian Channel]
Let $\{Z_k\}_{k=1}^n$ be a set of independent Gaussian variables with $Z_k\sim \gv(0,\frac{N_0}{2})$
and consider the Gaussian channel with input sequence of samples $\{X_k\}_{k=1}^n$ and output
$$
Y_k = X_k + Z_k, \mbox{ for } k = 1, \ldots, n.
$$
Suppose that
$$
\sum_{k=1}^n \E |X_k|^2 \leq n E_s.
$$
Then, the capacity of the channel is given by
$$
C = \log_2 \lp1 + \frac{2E_s}{N_0} \rp ~~~ \textup{bits/sample}
$$
and is achieved for $X_k\sim \gv(0,E_s)$.

\end{proposition}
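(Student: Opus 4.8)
The plan is to identify the capacity with the maximal mutual information between input and output and then carry out that maximization explicitly; the stated formula (without a factor $\tfrac12$) corresponds to the complex circularly symmetric convention in which $Z_k$ has $\E|Z_k|^2 = N_0/2$. Writing $X=(X_1,\dots,X_n)$, $Y=(Y_1,\dots,Y_n)$ and $Z=(Z_1,\dots,Z_n)$, I would start from $C=\tfrac1n\max I(X;Y)$, the maximum taken over input laws obeying $\sum_{k=1}^n\E|X_k|^2\le nE_s$. Since the $Z_k$ are independent and $Z$ is independent of $X$, the first step is the single-letter reduction
\[
I(X;Y)=h(Y)-h(Z)=h(Y)-\sum_{k=1}^n h(Z_k)\le \sum_{k=1}^n\big(h(Y_k)-h(Z_k)\big)=\sum_{k=1}^n I(X_k;Y_k),
\]
where the inequality is subadditivity of differential entropy. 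Thus it suffices to analyze a single use $Y_k=X_k+Z_k$ and then distribute the energy budget.

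For one channel use I would write $I(X_k;Y_k)=h(Y_k)-h(Y_k\mid X_k)$ and use independence of $X_k$ and $Z_k$ to reduce the conditional term to $h(Y_k\mid X_k)=h(Z_k)$, which depends only on the noise; for the Gaussian $Z_k$ this is the explicit constant $h(Z_k)=\log_2(\pi e N_0/2)$. To bound $h(Y_k)$, independence and zero mean give $\E|Y_k|^2=\E|X_k|^2+\E|Z_k|^2\le E_s+N_0/2$ whenever $\E|X_k|^2\le E_s$. Invoking the maximum-entropy property of the Gaussian --- among laws with a prescribed second moment the Gaussian uniquely maximizes differential entropy --- yields $h(Y_k)\le \log_2\!\big(\pi e(E_s+N_0/2)\big)$, hence $I(X_k;Y_k)\le \log_2\!\lp 1+\tfrac{2E_s}{N_0}\rp$. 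Summing over $k$ and dividing by $n$ gives the upper bound $C\le \log_2(1+2E_s/N_0)$.

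For the matching lower bound (achievability of the optimization) I would exhibit $X_k\sim\gv(0,E_s)$, independent across $k$: then $Y_k$ is Gaussian of variance $E_s+N_0/2$, every inequality above becomes an equality, and the power constraint holds with equality. That the uniform allocation $\E|X_k|^2=E_s$ is optimal follows from concavity of $t\mapsto\log_2(1+2t/N_0)$ together with Jensen's inequality applied to the per-sample budget, so no unequal split can do better.

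The main obstacle is the maximum-entropy step; everything else is bookkeeping. I would prove it by a relative-entropy argument: for $Y_k$ with $\E|Y_k|^2\le\sigma^2$ and a Gaussian $G$ of the same second moment, non-negativity of the Kullback--Leibler divergence $D(p_{Y_k}\,\|\,p_G)\ge0$ rearranges into $h(Y_k)\le h(G)$, with equality iff $Y_k$ is Gaussian; the key point is that $\log p_G$ is quadratic, so the cross term $-\int p_{Y_k}\log p_G$ is determined by the second moment alone and equals $h(G)$. The remaining ingredients --- the identification of capacity with $\max I(X;Y)$ (Shannon's coding theorem) and the subadditivity and independence facts used above --- I would cite from \cite{gallager} rather than reprove.
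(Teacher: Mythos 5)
Your proposal is correct, but note that the paper does not prove this proposition at all: its ``proof'' is the citation \cite{shannon:48}, so any comparison is between your argument and the classical literature rather than anything in the text. What you give is the standard converse-plus-achievability argument: reduce capacity to $\max I(X;Y)$ via the coding theorem (which you, like the paper, cite), single-letterize using $h(Y\mid X)=h(Z)=\sum_k h(Z_k)$ and subadditivity of differential entropy, bound $h(Y_k)$ by the Gaussian maximum-entropy lemma (your KL-divergence derivation of it is correct, the point being that $\log p_G$ is quadratic so the cross-entropy depends only on the second moment), and close all inequalities with i.i.d.\ Gaussian inputs. Two points deserve care. First, the constraint is on $\sum_k \E|X_k|^2$, not on each $\E|X_k|^2$; your converse paragraph silently assumes the per-letter bound $\E|X_k|^2\le E_s$, and the concavity/Jensen remark you defer to the achievability discussion is in fact the step that completes the converse. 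It should appear there: with $P_k=\E|X_k|^2$,
\begin{equation*}
\frac{1}{n}\sum_{k=1}^n \log_2\lp 1+\frac{2P_k}{N_0}\rp \le \log_2\lp 1+\frac{2}{nN_0}\sum_{k=1}^n P_k\rp \le \log_2\lp 1+\frac{2E_s}{N_0}\rp .
\end{equation*}
Second, the convention: you correctly flag that the stated formula, lacking the factor $\tfrac12$, matches the complex circularly-symmetric convention, and your $h(Z_k)=\log_2(\pi e N_0/2)$ is consistent with that reading. Be aware, however, that the paper defines $\gv(0,\X)$ via $\E\{XX^\intercal\}$, i.e.\ real variables, and in Section~\ref{timeFTN} uses $C_m=m\cdot\frac12\log_2\lp 1+\frac{2\rho E_s}{N_0}\rp$, the real-sample formula; so the proposition as printed is off by a factor of $2$ relative to its later use, and your complex reading is a defensible repair of the statement rather than a reconstruction of the paper's (real-valued) intent --- under the paper's own conventions the correct conclusion is $C=\frac12\log_2\lp 1+\frac{2E_s}{N_0}\rp$ bits per real sample.
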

\begin{proof}
See \cite{shannon:48}.
\end{proof}

\begin{definition}[Toeplitz matrix]\label{def:toeplitz_mat}
A matrix $T_n$ is called Toeplitz if it has the following form
$$T_n = \left( \begin{array}{c c c c c}c_{0} & c_{-1} & c_{-2} & \dots & c_{-(n-1)}\\
c_1 & c_0 & c_{-1} & \dots & c_{-(n-2)}\\
c_2 & c_1 & c_{0} & \dots & c_{-(n-3)}\\
\vdots & \vdots & \vdots & \ddots & \vdots\\
c_{n-1} & c_{n-2} & c_{n-3} & \dots & c_{0}
\end{array}\right).$$
\end{definition}

\begin{definition}[Associate function] \label{def:toeplitz_fun}
Let $T_n$ be a Toeplitz matrix according to Definition \ref{def:toeplitz_mat}. A function $f\in \Lone$ is called an associate function to $T_n$ if  $f(z)$ is associated with the corresponding Fourier series of the matrix elements i.e.,  $$f(z) \sim \ \sum_{k=-\infty}^\infty c_k e^{ikz},$$
where 
$$ c_k = \frac{1}{2\pi} \int_{-\pi}^{\pi}e^{-ikz} f(z) dz. $$
The matrix is sometimes denoted $T_n(f)$, and in the infinite case we write $T(f)$.
\end{definition}
The notation is motivated by the fact that the associated function determines the distribution of the eigenvalues of the matrix. To state this, we need the following definition.

\begin{definition}[Equal distribution] \label{def:equal_dist}
Let $M$ be a constant in $\R$ and $0<n\in \mathbb{N}$. Also, let $\{a_\ell^{(n)}\}_{\ell=1}^{n}$ and $\{b_\ell^{(n)}\}_{\ell=1}^{n}$ be sets in $\R$ such that$$|a_\ell^{(n)}|<M,\quad|b_\ell^{(n)}|<M,\quad \mbox{ for all } 1\le\ell \le n$$
We say that $\{a_\ell^{(n)}\}_{\ell=1}^{n}$ and $\{b_\ell^{(n)}\}_{\ell=1}^{n}$ are equally distributed on $[-M,\, M]$ if for any continuous function $$F \, : \, [-M,\, M] \rightarrow \R,$$
we have that
$$\lim_{n\rightarrow\infty}\frac{\sum_{\ell=1}^{n}F(a_\ell^{(n)})-F(b_\ell^{(n)})}{n} = 0 .$$
\end{definition}

\begin{proposition}[Spectrum of a Toeplitz matrix]\label{prop:toeplitz_spectrum}
Let $T_n$ be a Toeplitz matrix with associated function $f$. Also let $\lambda^{(n)}_\ell $,  $\ell=1,2,\dots,n$, be the eigenvalues of $T_n$. Then,
$$ \inf_z f(z) \leq\, \lambda_\ell^{(n)}\leq\sup_z f(z).$$
Moreover, the sets $$ \left\{\lambda_\ell^{(n)}\right\}_{\ell=1}^{n} \quad \text{and} \quad \left\{f\left(\frac{2\pi \ell}{n}-\pi \right)\right\}_{\ell=1}^{n} \ $$ are equally distributed.
\end{proposition}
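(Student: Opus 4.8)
The plan is to treat the two assertions separately: the eigenvalue localization is elementary via the Rayleigh quotient, while the equal-distribution statement is the substantive part, a Szegő-type limit theorem.

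For the bounds I would use the quadratic form of the Hermitian matrix $T_n = T_n(f)$ (the hypotheses $\inf_z f\le\lambda\le\sup_z f$ force $f$ real, hence $c_{-k}=\bar c_k$). For a unit vector $u=(u_1,\dots,u_n)$, substituting $c_{j-k} = \frac{1}{2\pi}\int_{-\pi}^\pi e^{-i(j-k)z}f(z)\,dz$ collapses the double sum into
\[
u^* T_n u = \frac{1}{2\pi}\int_{-\pi}^\pi f(z)\,\Big|\sum_{k=1}^n u_k e^{ikz}\Big|^2 dz .
\]
Since $\frac{1}{2\pi}\int_{-\pi}^\pi \big|\sum_k u_k e^{ikz}\big|^2\,dz = \sum_k|u_k|^2 = 1$ by Parseval, bounding $f$ pointwise by $\inf_z f$ and $\sup_z f$ gives $\inf_z f \le u^*T_n u \le \sup_z f$. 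The Rayleigh--Ritz characterization of the eigenvalues then yields $\inf_z f \le \lambda_\ell^{(n)} \le \sup_z f$ for every $\ell$.

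For the equal distribution I would first reduce from continuous $F$ to monomials. Because every eigenvalue and every sample $f(2\pi\ell/n-\pi)$ lies in the compact interval $[\inf_z f,\ \sup_z f]$ by the first part, and because the samples form a Riemann sum so that $\frac1n\sum_\ell F(f(2\pi\ell/n-\pi)) \to \frac{1}{2\pi}\int_{-\pi}^\pi F(f(z))\,dz$, the Weierstrass approximation theorem reduces the whole claim to establishing
\[
\lim_{n\to\infty}\frac1n\sum_{\ell=1}^n \big(\lambda_\ell^{(n)}\big)^m = \frac{1}{2\pi}\int_{-\pi}^\pi f(z)^m\,dz \qquad \text{for each } m\in\N .
\]
Rewriting the left side as $\frac1n\Tr(T_n^m)$, the core is the elementary identity $\frac1n\Tr\big(T_n(f^m)\big) = c_0(f^m) = \frac{1}{2\pi}\int_{-\pi}^\pi f^m$ (all diagonal entries of a Toeplitz matrix equal its zeroth coefficient), together with the claim that replacing $T_n(f)^m$ by $T_n(f^m)$ costs only $o(n)$ in the trace.

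The main obstacle, and the step I would treat most carefully, is this last claim. I would prove it inductively from the single-step estimate $T_n(f)T_n(g) = T_n(fg) + E_n$, where the error $E_n$ comes purely from truncating the convolution $\sum_k a_{i-k}b_{k-j}$ to the range $k\in\{1,\dots,n\}$; its entries are supported near the two corners of the matrix, so $\frac1n\Tr(P\,E_n)\to 0$ for any product $P$ of Toeplitz factors of uniformly bounded norm, and iterating collapses $\frac1n\Tr\big(T_n(f)^m\big)$ to $\frac1n\Tr\big(T_n(f^m)\big)$. Equivalently one can argue combinatorially, expanding $\Tr(T_n^m)=\sum c_{i_1-i_2}c_{i_2-i_3}\cdots c_{i_m-i_1}$ and noting that for each fixed tuple of differences $k_s=i_s-i_{s+1}$ (which automatically satisfy $\sum_s k_s=0$) the number of admissible index vectors in $\{1,\dots,n\}^m$ is $n-O(\max_s|k_s|)$, so after dividing by $n$ the sum converges to $\sum_{k_1+\cdots+k_m=0}c_{k_1}\cdots c_{k_m}=\frac{1}{2\pi}\int_{-\pi}^\pi f^m$. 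The delicate point in either route is justifying the interchange of limit and infinite summation, which I would handle by dominating the tails using the decay and summability of the coefficients $c_k$ inherited from $f$.
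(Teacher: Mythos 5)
Your proposal is correct and follows essentially the same route as the paper, whose ``proof'' is simply a citation to \cite{szego_monograph} and \cite{gray}: the eigenvalue bounds via the quadratic-form identity $u^*T_nu=\frac{1}{2\pi}\int_{-\pi}^{\pi}f(z)\bigl|\sum_{k}u_ke^{ikz}\bigr|^2dz$ together with Rayleigh--Ritz, and the equal distribution via Weierstrass reduction to the moment asymptotics $\frac{1}{n}\Tr(T_n^m)\to\frac{1}{2\pi}\int_{-\pi}^{\pi}f(z)^m\,dz$ (established either by the Toeplitz-product asymptotic-equivalence lemma or by the combinatorial trace expansion), are precisely the classical Szeg\H{o} argument reproduced in those references. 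The only caveats---Riemann integrability (or continuity) of $f$ for your Riemann-sum step, and summability of the coefficients $c_k$ for your tail domination---are hypotheses the paper's statement leaves tacit as well, and they are satisfied in the Wiener-class setting in which the proposition is actually applied.
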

\begin{proof}
See \cite{szego_monograph} or \cite{ gray}.
\end{proof}

\begin{proposition}[Associate function of the square root inverse]\label{prop:sqrt_inv_toep}
Let $T_n$ be a Toeplitz matrix with associate function $f$ and let $f(z)>0$ for all $-\pi\leq z\leq\pi$. Moreover, let $\sqrt{f}$ be such that its Fourier series has absolutely summable coefficients (that is; $\sqrt{f}$ is in the Wiener class \cite{gray}). Now let $K_n$ be the inverse of the hermitian square root of $T_n$,  $$K_n= ((T_n)^{\frac{1}{2}})^{-1}.$$
Then, $K_n$ is asymptotically Toeplitz in the sense that there exists a Toeplitz matrix $\tau_n$ such that for $D_n = K_n-\tau_n$
$$\lim_{n\rightarrow\infty} \frac{1}{n}\Tr(D_n^*D_n) = 0 . $$
Moreover, the associated function to $K_n$ is given by $1/\sqrt{f(z)}$
\end{proposition}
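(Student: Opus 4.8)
The plan is to realize $K_n = (T_n^{1/2})^{-1} = T_n^{-1/2}$ as a \emph{function} of the Hermitian matrix $T_n = T_n(f)$ and then invoke the standard machinery of asymptotic equivalence of matrix sequences. Recall that two uniformly operator-norm-bounded sequences $A_n, B_n$ are called asymptotically equivalent, written $A_n \sim B_n$, if $\frac{1}{n}\Tr\big((A_n-B_n)^*(A_n-B_n)\big) \to 0$; this normalized Hilbert--Schmidt quantity is exactly the one in the statement, so the conclusion is precisely $K_n \sim T_n(1/\sqrt f)$ with $\tau_n = T_n(1/\sqrt f)$ and $D_n = K_n - \tau_n$.

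First I would record the facts needed about $f$. Since $\sqrt f$ lies in the Wiener class, so does $f = (\sqrt f)^2$, the Wiener class being a Banach algebra under pointwise multiplication; in particular $f$ is real-valued, so $c_{-k}=\overline{c_k}$ and $T_n$ is Hermitian. Because $f$ is continuous and $f(z)>0$ on the compact circle, $m \coleq \inf_z f(z) > 0$ and $M \coleq \sup_z f(z) < \infty$, and by Proposition \ref{prop:toeplitz_spectrum} every eigenvalue of $T_n$ lies in $[m,M]\subset(0,\infty)$. Hence $T_n\succ 0$, the Hermitian square root and its inverse exist, and $K_n = h(T_n)$ for the function $h(x)=x^{-1/2}$, which is continuous on $[m,M]$. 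Moreover $1/\sqrt f$ is again Wiener-class, since $\sqrt f$ is Wiener-class and bounded below by $\sqrt m > 0$ (Wiener's inversion theorem), so $T_n(1/\sqrt f)$ is a well-defined, uniformly bounded Toeplitz sequence.

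The core rests on two classical facts about Toeplitz/circulant sequences (see \cite{gray}). The first is that a Wiener-class symbol generates a Toeplitz sequence asymptotically equivalent to the associated circulant, $T_n(g) \sim C_n(g)$, where $C_n(g)$ is diagonalized by the DFT with eigenvalues $g(2\pi \ell/n)$. The second is that $\sim$ is preserved under sums, products, and inversion of uniformly invertible sequences. From these one extracts the key functional-calculus statement: if $A_n \sim B_n$ are Hermitian with spectra in a fixed $[m,M]$, then $h(A_n) \sim h(B_n)$ for every continuous $h$ on $[m,M]$. Granting it, the proof is the short chain $K_n = h(T_n(f)) \sim h(C_n(f)) = C_n(h\circ f) = C_n(1/\sqrt f) \sim T_n(1/\sqrt f)$, where the two middle equalities use that the functional calculus is \emph{exact} on circulants and that $h\circ f = 1/\sqrt f$ is Wiener-class. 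Transitivity of $\sim$ then gives $K_n \sim T_n(1/\sqrt f)$, which is the claimed limit $\frac{1}{n}\Tr(D_n^*D_n)\to 0$ and identifies $1/\sqrt f$ as the associate function of $K_n$.

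The main obstacle is the functional-calculus lemma, which I would prove by polynomial approximation. Given $\varepsilon>0$, choose by Weierstrass a polynomial $p$ with $\sup_{[m,M]}|h-p| < \varepsilon$. Since $A_n,B_n$ are Hermitian with spectrum in $[m,M]$, the spectral mapping theorem yields the operator-norm bounds $\|h(A_n)-p(A_n)\| \le \sup_{[m,M]}|h-p| < \varepsilon$ and likewise for $B_n$, while $p(A_n)\sim p(B_n)$ because sums and products preserve $\sim$ (using the uniform bounds $\|A_n\|,\|B_n\|\le M$). Bounding the normalized Hilbert--Schmidt norm by the operator norm via $\frac{1}{\sqrt n}\|X\|_F \le \|X\|$ and splitting $h(A_n)-h(B_n) = (h(A_n)-p(A_n)) + (p(A_n)-p(B_n)) + (p(B_n)-h(B_n))$, one obtains $\limsup_n \frac{1}{\sqrt n}\|h(A_n)-h(B_n)\|_F \le 2\varepsilon$; letting $\varepsilon\to 0$ closes the argument. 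The only points requiring care are the uniform operator-norm and uniform-invertibility bounds needed when specializing to $h(x)=x^{-1/2}$, and both follow from the spectral containment in $[m,M]\subset(0,\infty)$ guaranteed by Proposition \ref{prop:toeplitz_spectrum}.
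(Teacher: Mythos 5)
Your argument is correct, but it is considerably more self-contained than the paper's, whose entire proof consists of the remark that $\sqrt f$ in the Wiener class forces $f$ to be in the Wiener class too, followed by a citation of Theorems 5.2(c) and 5.3(c) of \cite{gray}: those catalogued results describe the Hermitian square root of a positive Wiener-class Toeplitz sequence as asymptotically Toeplitz with symbol $\sqrt f$, i.e.\ $T_n(f)^{1/2}\sim T_n(\sqrt f)$, and inverses as asymptotically Toeplitz with the reciprocal symbol, $T_n(g)^{-1}\sim T_n(1/g)$, so the paper gets $K_n\sim T_n(1/\sqrt f)$ by composing the two (using stability of $\sim$ under inversion of uniformly invertible sequences). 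You instead prove the content of those citations directly and in one stroke: writing $K_n=h(T_n(f))$ with $h(x)=x^{-1/2}$, establishing by Weierstrass approximation the lemma that $A_n\sim B_n$ Hermitian with spectra in a fixed $[m,M]$ implies $h(A_n)\sim h(B_n)$ for continuous $h$, and running the chain $h(T_n(f))\sim h(C_n(f))=C_n(1/\sqrt f)\sim T_n(1/\sqrt f)$. This is exactly the circulant-approximation machinery that underlies Gray's theorems, so the two routes agree in spirit, but the decompositions genuinely differ: your single functional-calculus lemma handles the square root and the inversion simultaneously (and would handle any continuous function of the symbol at no extra cost), needing from \cite{gray} only the elementary equivalence $T_n(g)\sim C_n(g)$ for Wiener-class $g$ and the algebra of $\sim$, whereas the paper's two-line citation is shorter but leaves the mechanism opaque. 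Your handling of the hypotheses is also right where it needs to be: positivity and continuity of $f$ give $m=\inf_z f(z)>0$ and spectra in $[m,M]\subset(0,\infty)$ via Proposition \ref{prop:toeplitz_spectrum}; Wiener's inversion theorem puts $1/\sqrt f$ in the Wiener class, which is precisely what licenses the final step $C_n(1/\sqrt f)\sim T_n(1/\sqrt f)$; and the bound $n^{-1/2}\|X\|_F\le\|X\|$ correctly converts the operator-norm estimates from the spectral mapping theorem into the normalized trace criterion $\frac{1}{n}\Tr(D_n^*D_n)\to 0$ demanded by the statement.
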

\begin{proof}
If $\sqrt{f}$ is in the Wiener class, then so is $f$. The proposition now follows from application of Theorem 5.3 (c) and Theorem 5.2 (c) in \cite{gray}.
\end{proof}

\begin{definition}[Circulant matrix]\label{def:circulant_mat}
A matrix $C_n$ is called Circulant if it has the following form
$$C_n = \left( \begin{array}{c c c c c}c_{0} & c_{1} & c_{2} & \dots & c_{n-1}\\
c_{n-1} & c_0 & c_{1} & \dots & c_{n-2}\\
c_{n-2} & c_{n-1} & c_{0} & \dots & c_{n-3}\\
\vdots & \vdots & \vdots & \ddots & \vdots\\
c_{1} & c_{2} & c_{3} & \dots & c_{0}
\end{array}\right) \ .$$
It is thus a special case of a Toeplitz matrix.
\end{definition}

\begin{proposition}[Diagonalizing a Circulant matrix]\label{prop:circulant_fft}
Let $C_n$ be a Circulant matrix.
Then,
$$  C_n = U \Lambda_n U^*, $$
where $\Lambda_n = \diag(\lambda_1,\, \lambda_2,\, \dots,\, \lambda_n)$ and $U$ is the Fourier matrix: $[U]_{m,\ell}=e^{-2\pi i m \ell /n}$ for $m,\, \ell = 0,\,1, \,\dots,\, n-1$.
\end{proposition}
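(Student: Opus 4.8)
The plan is to exhibit the columns of the Fourier matrix $U$ as a complete set of eigenvectors of $C_n$, from which the stated factorization follows at once. Denote by $u_\ell$ the $\ell$-th column of $U$, so that its $m$-th entry is $(u_\ell)_m = e^{-2\pi i m \ell/n} = \zeta_\ell^{\,m}$ with $\zeta_\ell = e^{-2\pi i \ell/n}$; note that $\zeta_\ell^{\,n} = 1$. First I would record that, by Definition \ref{def:circulant_mat}, the entries of a circulant satisfy $[C_n]_{j,m} = c_{(m-j)\bmod n}$, so that $C_n$ is exactly the polynomial $\sum_{k=0}^{n-1} c_k P^k$ in the cyclic shift matrix $P$ (the circulant determined by $c_1 = 1$ and $c_k = 0$ otherwise). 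This observation is optional but explains why a single eigenbasis serves simultaneously for every circulant.

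Next I would compute $C_n u_\ell$ directly. For each row $j$,
\begin{equation*}
(C_n u_\ell)_j = \sum_{m=0}^{n-1} c_{(m-j)\bmod n}\, \zeta_\ell^{\,m}.
\end{equation*}
Reindexing by $k = (m-j)\bmod n$ and using $\zeta_\ell^{\,n}=1$ to write $\zeta_\ell^{\,m} = \zeta_\ell^{\,j}\zeta_\ell^{\,k}$, the sum factors as $\zeta_\ell^{\,j}\sum_{k=0}^{n-1} c_k \zeta_\ell^{\,k} = \lambda_\ell (u_\ell)_j$, where
\begin{equation*}
\lambda_\ell = \sum_{k=0}^{n-1} c_k\, e^{-2\pi i k\ell/n}.
\end{equation*}
Hence $C_n u_\ell = \lambda_\ell u_\ell$ for every $\ell$, i.e. $C_n U = U\Lambda_n$ with $\Lambda_n = \diag(\lambda_0,\dots,\lambda_{n-1})$.

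Finally I would invert $U$. The orthogonality of characters, $\sum_{m=0}^{n-1} e^{2\pi i m(k-j)/n} = n\,\delta_{jk}$, gives $U U^* = n I_n$, so $U$ is invertible with $U^{-1} = \tfrac{1}{n}U^*$. Combining this with $C_n U = U\Lambda_n$ yields $C_n = U\Lambda_n U^{-1} = \tfrac1n U \Lambda_n U^*$, which is the claimed diagonalization once the scalar $1/n$ is absorbed into $\Lambda_n$ (equivalently, upon replacing $U$ by the genuinely unitary matrix $\tfrac{1}{\sqrt n}U$). The only real work is the reindexing in the eigenvalue computation and the geometric-series identity behind $UU^* = nI_n$; there is no serious obstacle here, the single point to watch being the normalization constant relating $U^*$ and $U^{-1}$.
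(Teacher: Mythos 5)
Your proposal is correct, and it supplies more than the paper does: the paper's ``proof'' of Proposition \ref{prop:circulant_fft} consists solely of a pointer to \cite{gray} and \cite{fast_multip}, so there is no in-paper argument to compare against. Your computation --- verifying that each column $u_\ell$ of the Fourier matrix is an eigenvector of $C_n$ with eigenvalue $\lambda_\ell = \sum_{k=0}^{n-1} c_k e^{-2\pi i k\ell/n}$ via the reindexing $k=(m-j)\bmod n$ together with $\zeta_\ell^n=1$, and then inverting $U$ through the character-orthogonality identity $UU^*=nI_n$ --- is exactly the standard argument found in those references, carried out cleanly and with no gaps. The optional remark that $C_n=\sum_{k}c_kP^k$ for the cyclic shift $P$ is a nice touch, since it explains why one fixed eigenbasis works for all circulants simultaneously.

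You were also right to flag the normalization: as literally stated, with $[U]_{m,\ell}=e^{-2\pi i m\ell/n}$ unnormalized, the identity $C_n=U\Lambda_n U^*$ is off by a factor of $n$, and the proposition is true only for the unitary matrix $n^{-1/2}U$, which is the convention actually used in \cite{gray}. One caveat on your two proposed repairs: replacing $U$ by $n^{-1/2}U$ is the correct fix, whereas ``absorbing $1/n$ into $\Lambda_n$'' does give a valid factorization $C_n = U\bigl(\tfrac{1}{n}\Lambda_n\bigr)U^*$, but the diagonal entries $\lambda_\ell/n$ are then no longer the eigenvalues of $C_n$, so that reading is inconsistent with the evident intent that $\Lambda_n$ collect the spectrum. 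Prefer the unitary-normalization fix and state it as a correction to the proposition rather than as an equivalent alternative.
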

\begin{proof}
See \cite{ gray} and \cite{fast_multip}.
\end{proof}

To be concise and to avoid ambiguities regarding the simulations we also define the following words:
\begin{definition}[Payload bits]\label{def:pay_bits}
A \textit{payload bit} is a bit realized from some input distribution. This bit represents pure data that we want to communicate and it is on sets of these that we compute block-error rates (BLER) and bit-error rates (BER); these are also used in the measure of throughput.
\end{definition}
\begin{definition}[Physical bits]\label{def:phy_bits}
A \textit{physical bit} is the bit physically transfered over the channel. A physical bit represents some type of coded realization of a payload bit. The number of physical bits, denoted $\#\text{physical bits}$, is $$\#\text{payload bits} = \#\text{physical bits} \cdot \text{code rate} . $$
When measuring signal-to-noise ratio (SNR), it will be done with respect to physical bits.
\end{definition}


Now introduce the scalar products
\begin{equation}
\label{h_scalar_product}
	H_{k\ell} = \langle h_k, h_\ell \rangle = \intinf h_k(t)\overline{h_\ell(t)} d t
\end{equation}
and define the matrix $H$ whose elements $[H]_{k\ell} = H_{k\ell}$ for $k,\ell = 1, \ldots, n$.
The matrix $H$ is the Gramian of the functions $h_1, \ldots, h_n$.

\begin{proposition}[Positive Definiteness of the Gramian Matrix]
\label{invgram}
Let $h_1, \ldots, h_n\in \Ltwo$ and consider the Gram matrix
$$
H = \begin{pmatrix}
		H_{11} & H_{12} & \cdots & H_{1n} \\ 
		H_{21} & H_{22} & \cdots & H_{2n} \\
		\vdots  	& \vdots  &  & \vdots \\
		H_{n1} & H_{n2} & \cdots & H_{nn} \\
	\end{pmatrix}
$$
where $H_{k\ell} = \langle h_k, h_\ell \rangle$. Then, $H\succeq 0$. Furthermore, $H$ is invertible if and only if
$h_1, ..., h_n$ are linearly independent.
\end{proposition}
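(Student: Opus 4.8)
The plan is to reduce the whole statement to a single algebraic identity relating the Hermitian form of $H$ to the squared $\Ltwo$-norm of a linear combination of the $h_k$. First I would record that $H$ is Hermitian: since the inner product satisfies $\langle h_\ell, h_k\rangle = \overline{\langle h_k, h_\ell\rangle}$, we have $H_{\ell k} = \overline{H_{k\ell}}$, i.e. $H = H^*$. Then, for an arbitrary $c = (c_1,\dots,c_n)^\intercal \in \C^n$, I would expand $c^* H c$ using the sesquilinearity of $\langle\cdot,\cdot\rangle$ (linear in the first argument, conjugate-linear in the second). Writing $a_k = \overline{c_k}$, a direct computation gives
\begin{equation*}
c^* H c = \sum_{k,\ell} \overline{c_k}\, c_\ell \langle h_k, h_\ell\rangle = \Big\langle \sum_k a_k h_k,\ \sum_\ell a_\ell h_\ell \Big\rangle = \Big\| \sum_{k=1}^n \overline{c_k}\, h_k \Big\|^2 .
\end{equation*}
Since the right-hand side is a squared norm it is nonnegative for every $c$, which is exactly the assertion $H \succeq 0$.

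For the characterization of invertibility, the key observation is that for a positive semidefinite matrix the quadratic form vanishes precisely on the kernel: $c^* H c = 0$ if and only if $Hc = 0$. I would justify this by factoring $H = H^{1/2} H^{1/2}$ through its Hermitian positive semidefinite square root, so that $c^* H c = \|H^{1/2} c\|^2$; vanishing of this norm forces $H^{1/2} c = 0$ and hence $Hc = 0$, the converse being immediate. Combined with the identity above, this gives the chain of equivalences $Hc = 0 \iff c^* H c = 0 \iff \sum_k \overline{c_k}\, h_k = 0$.

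It then remains to connect this kernel condition to linear (in)dependence. If $h_1,\dots,h_n$ are linearly dependent, there is a nonzero tuple $(a_1,\dots,a_n)$ with $\sum_k a_k h_k = 0$; taking $c_k = \overline{a_k}$ produces a nonzero vector in the kernel of $H$, so $H$ is singular. Conversely, if $Hc = 0$ for some $c \neq 0$, then $\sum_k \overline{c_k}\, h_k = 0$ is a nontrivial linear relation, so the $h_k$ are dependent. Contrapositively, linear independence is equivalent to $H$ having trivial kernel, that is, to $H$ being invertible.

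I do not anticipate a serious obstacle, as this is the classical Gram-matrix argument; the only points requiring care are the bookkeeping of conjugates dictated by the convention $\langle f,g\rangle = \int f \overline{g}$ (which is why the combination $\sum_k \overline{c_k}\, h_k$ rather than $\sum_k c_k h_k$ appears), and the passage from $c^* H c = 0$ to $Hc = 0$. If one prefers to avoid invoking the square root for the latter, an alternative is the Cauchy--Schwarz inequality for the semidefinite form $(x,y)\mapsto x^* H y$, namely $|(Hc)_j|^2 = |e_j^* H c|^2 \le H_{jj}\,(c^* H c)$ for each coordinate $j$, which forces $Hc = 0$ whenever $c^* H c = 0$.
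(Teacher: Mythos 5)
Your proof is correct; the only thing to compare it against is that the paper does not actually prove this proposition---its ``proof'' is the pointer \cite[pp.~12--13]{akhiezer:1993}---so your self-contained argument is the classical Gram-matrix computation that the cited reference contains. The two delicate points you identify are exactly the right ones, and both are handled correctly. First, under the paper's convention $\langle f,g\rangle=\int f\overline{g}$ (linear in the first slot, conjugate-linear in the second), the identity
$$
c^{*}Hc \;=\; \Bigl\|\sum_{k=1}^{n}\overline{c_k}\,h_k\Bigr\|^{2}
$$
is exactly right, and your substitution $a_k=\overline{c_k}$ keeps the conjugate bookkeeping straight (the version with $\sum_k c_k h_k$ would compute $\overline{c}^{\,*}H\overline{c}$ instead, which is harmless for semidefiniteness but would muddle the kernel correspondence). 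Second, the passage from $c^{*}Hc=0$ to $Hc=0$ is where a naive argument stalls; your square-root factorization is legitimate and non-circular because by that point you have already established $H=H^{*}\succeq 0$, so the Hermitian square root exists, and your Cauchy--Schwarz fallback $|e_j^{*}Hc|^{2}\le H_{jj}\,(c^{*}Hc)$ is an equally valid, more elementary route. The closing equivalences ($H$ singular iff some nonzero $c$ has $c^{*}Hc=0$ iff the $h_k$ admit a nontrivial relation in $\Ltwo$) give both directions of the ``if and only if.'' What your route buys over the paper's: the proposition becomes verifiable without consulting an external text, and it makes explicit that $H\succeq 0$ together with invertibility yields $H\succ 0$, which is precisely how the paper uses the result in Section~\ref{timeFTN} after invoking Lemma~\ref{specind}.
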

\begin{proof}
See \cite[pp. 12-13]{akhiezer:1993}.
\end{proof}

We will consider Mazo's model \cite{mazo:FTN} consisting of a finite number of modulated $\sinc$ pulses transmitted at a rate that is higher than the Nyquist rate. 
Let $T=\frac{1}{2W}$ and

\begin{equation*}
	g(t) = \sqrt{2W}\cdot \sinc{(2 W t)}   = \sqrt{T}\cdot \frac{\sin{2\pi W t}}{\pi t}.
\end{equation*}
Introduce
$$
h_k(t) = g\lp t - \rho \frac{k-1}{2W}\rp
$$ 
for $\rho \in (0,1]$.
For any real number $\tau$, we have that
$$
	g(t-\tau) \xrightarrow{\mathcal{F}}G(\omega) e^{- i\omega \tau}
$$
and so the spectrum of $g(t-\tau)$ is invariant under (time) shifting since
$$
|G(\omega) e^{- i\omega \tau}| = |G(\omega)|.
$$
Now, let $X(t)$ be a signal consisting of 
$m = \lfloor n/\rho\rfloor$ samples transmitted with a rate that is \textit{faster than Nyquist} (that is $\rho<1$), given by

\begin{equation} \label{eq:X}
	X(t) = \sum_{k=1}^{m} A_k h_k(t)
\end{equation}
where $\{A_k\}$ are real valued random variables. 
It's faster than Nyquist in the sense that the pulses are spaced in-between with a time distance $\rho T$ instead of $T$. Thus, the sample density is $\frac{1}{\rho T}=\frac{2W}{\rho}$ samples/second. 

\begin{lemma}
\label{specind}
The functions 
$$
h_k(t) = g\lp t - \rho \frac{k-1}{2W}\rp ,
$$ 
$k=1, ..., m$, are linearly independent.
\end{lemma}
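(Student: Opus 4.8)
The plan is to pass to the frequency domain, where the time shifts become pure modulations and the band-limited structure of the $\sinc$ pulse can be exploited. Write $\tau_k = \rho\frac{k-1}{2W}$ for the shift associated with $h_k$, so that $h_k(t) = g(t-\tau_k)$. First I would record the Fourier transform of the generating pulse. Since $g(t) = \sqrt{T}\,\frac{\sin 2\pi W t}{\pi t}$, a direct computation gives
$$ G(\omega) = \sqrt{T}\,\rect\lp \frac{\omega}{4\pi W}\rp, $$
so $G$ is supported on $[-2\pi W,\, 2\pi W]$ and equals the nonzero constant $\sqrt{T}$ throughout the open interval $(-2\pi W,\, 2\pi W)$. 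The shift property stated just before the lemma then yields $H_k(\omega) = G(\omega)\,e^{-i\omega\tau_k}$ for each $k$.

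Next I would test linear independence directly. Suppose $\sum_{k=1}^m c_k h_k = 0$ in $\Ltwo$. Because the Fourier transform is injective on $\Ltwo$ (Plancherel), the transforms must satisfy $\sum_{k=1}^m c_k\, G(\omega)\,e^{-i\omega\tau_k} = 0$ for almost every $\omega$. Restricting to $(-2\pi W,\, 2\pi W)$, where $G(\omega)=\sqrt{T}\neq 0$, and dividing by this constant reduces the problem to showing that
$$ \sum_{k=1}^m c_k\, e^{-i\omega\tau_k} = 0 \quad \text{for a.e. } \omega\in(-2\pi W,\, 2\pi W) $$
forces $c_1=\dots=c_m=0$. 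Note that the shifts $\tau_k = \rho\frac{k-1}{2W}$ are pairwise distinct, since $\rho\in(0,1]$ and the indices are distinct.

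So the whole lemma comes down to the classical fact that exponentials $e^{-i\omega\tau_k}$ with distinct $\tau_k$ are linearly independent over any interval of positive length. The cleanest route is analyticity: the left-hand side above is the restriction to the real axis of the entire function $\Phi(\zeta)=\sum_{k=1}^m c_k\, e^{-i\zeta\tau_k}$, and an entire function that vanishes on a set with an accumulation point (such as a subinterval of positive measure) must vanish identically by the identity theorem. Repeatedly differentiating $\Phi$ and evaluating at a single point, or invoking the nonvanishing Vandermonde determinant built from the distinct values $\tau_k$, then forces every coefficient $c_k$ to be zero. I expect this last step to be the only substantive point; everything preceding it is routine bookkeeping in the Fourier domain once one observes that $G$ does not vanish on its support.
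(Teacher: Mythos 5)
Your proof is correct, and its first half --- passing to the Fourier domain, noting that $G(\omega)=\sqrt{T}\,\rect\lp\omega/(4\pi W)\rp$ is a nonzero constant on the band, dividing it out, and reducing the lemma to linear independence of the exponentials $e^{-i\omega\tau_k}$ with pairwise distinct $\tau_k$ on $(-2\pi W,\,2\pi W)$ --- is exactly the reduction the paper performs. Where you genuinely diverge is in how that classical independence fact is established. The paper proves it as a standalone result (Proposition~\ref{indexp}) by induction on the number of exponentials: multiply a hypothetical dependence relation by $e^{-\alpha_k\omega}$, differentiate on the interior of the interval to eliminate the constant term, and invoke the induction hypothesis. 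You instead observe that $\Phi(\zeta)=\sum_k c_k e^{-i\zeta\tau_k}$ is entire, apply the identity theorem to upgrade vanishing on the interval to $\Phi\equiv 0$, and then extract $c_1=\dots=c_m=0$ from the Taylor coefficients at a single point via the invertible Vandermonde system in the distinct nodes $-i\tau_k$. Both arguments are standard, but they buy different things: the paper's induction is elementary and self-contained, using only real differentiation; your complex-analytic route is shorter and strictly more robust, because you correctly track that equality in $\Ltwo$ only gives vanishing \emph{almost everywhere} after Plancherel, and the identity theorem needs merely an accumulation point of zeros --- whereas the paper's differentiation step tacitly assumes the dependence holds pointwise on an open set (which in this setting one would justify by remarking that band-limited $\Ltwo$ functions are continuous, a point the paper glosses over). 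In short: same skeleton, different and somewhat cleaner proof of the key lemma.
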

\begin{proof}
See the appendix.
\end{proof}


\begin{figure}[!t]
\centering
\includegraphics[width=0.48\textwidth, height=0.45\textwidth]{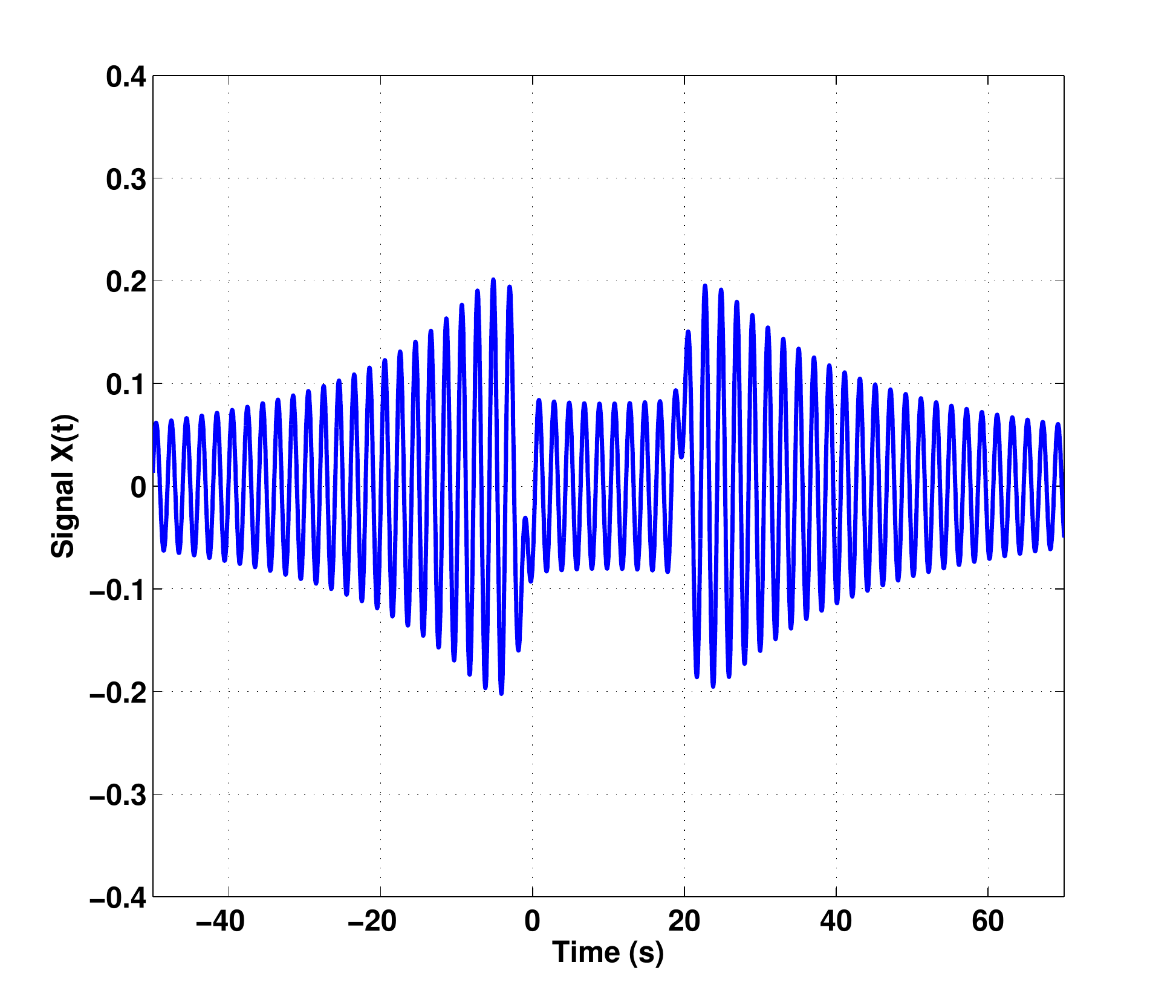}
\caption{Signal of the form \eqref{eq:X} where $\rho=0.81$, $n=20$, and $T=1$ with more than 50\% of the energy is outside the interval $\Omega = [-15, 34]$.} \label{fig:energy_outside_Ex}
\end{figure}

\begin{figure}[!t]
\centering
\includegraphics[width=0.48\textwidth, height=0.45\textwidth]{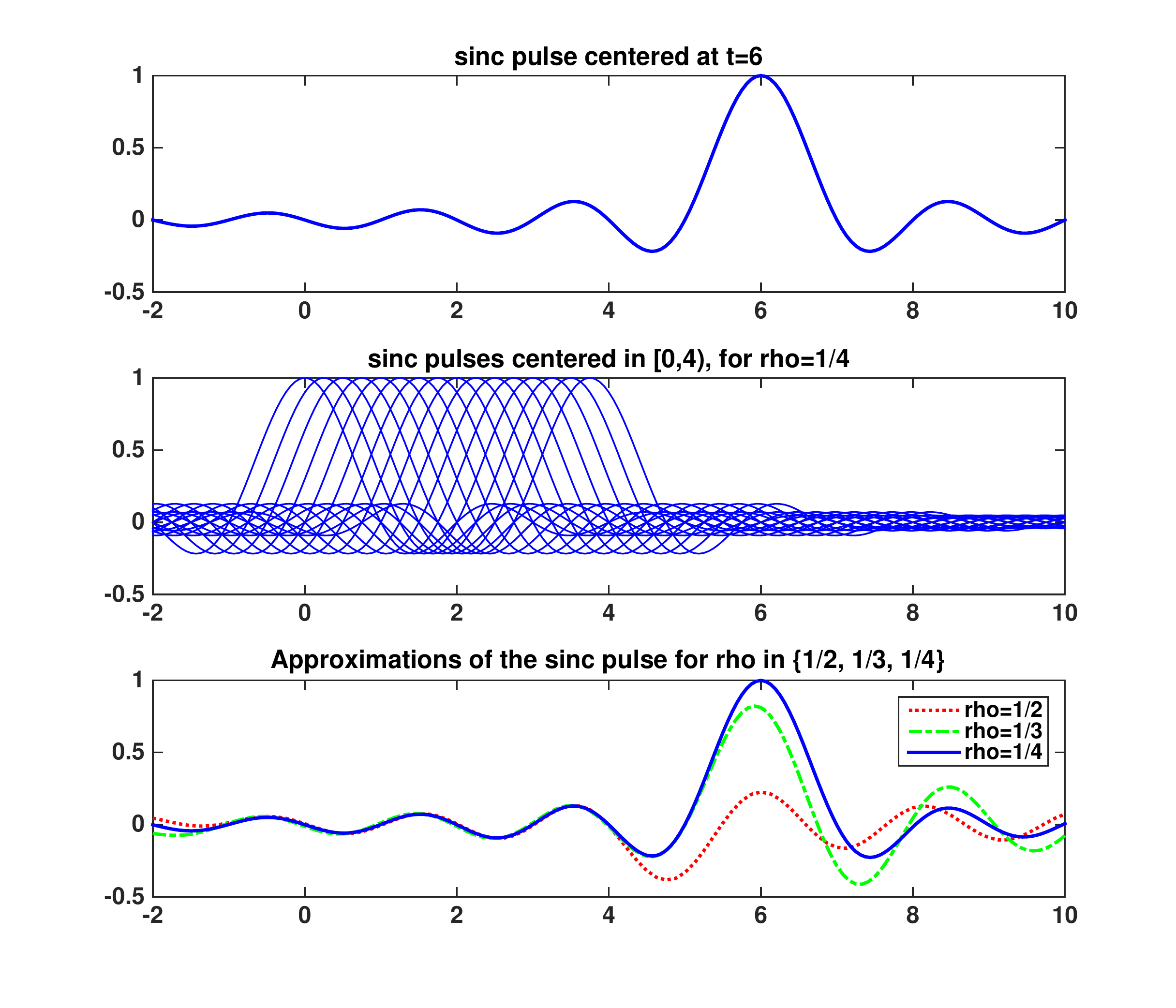}
\caption{Top: sinc pulse centered at $t=6$. Middle: Basis functions for $\rho=1/4$, $n=4$, and $T=1$. Bottom: Optimal $\mathcal{L}_2(\mathbb{R})$ approximations for $\rho\in \{1/2, 1/3, 1/4\}.$} \label{fig:approx2}
\end{figure}

\newcommand{\mR}{{\mathbb R}}
\section{Time Localization and FTN}\label{sec:timeloc}

A signal is well localized to a time interval $\Omega\subset \mR$ if the signal energy outside the interval is small compared to the total signal energy, i.e., if
\begin{equation}\label{eq:TL}
K_{\rm D}(X,\Omega)=\frac{\int_{ \Omega} X(t)^2 dt}{\|X(t)\|^2}\ge 1-\mu
\end{equation} 
where $\mu$ is close to zero. 

An implicit assumption when considering signals $X(t)$ of the form \eqref{eq:X} is that the signal is well localized in time to the interval $\Omega=[0,(n-1)T]$ (that is, approximately time-limited). However, in the faster than Nyquist framework, this assumption is violated and as we will see one can create signals with a considerable proportion of its energy content outside this region.
This implies that signals of the form \eqref{eq:X} with $\rho\ll 1$ may not satisfy \eqref{eq:TL} even for moderate size $\mu$ and intervals $\Omega$ that contains $[0,(n-1)T]$, for example  $\Omega=[-M,(n-1)T+M]$, for some $M>0$ (e.g., as depicted in Figure \ref{fig:energy_outside_Ex} with $M=15$).

To see another example of this, consider pulses consisting of the (non-orthogonal) basis functions  $g(t-\rho (k-1))$ for $k=0,\ldots, m = \lfloor 4/\rho\rfloor$ for a given $\rho<1$ and where $W=1/2$. Figure \ref{fig:approx2} shows the best quadratic approximations of the 
{$\sinc$} pulse $g(t-6)$ for $\rho\in \{1/2, 1/3, 1/4\}$. By Lemma \ref{specind} such approximation is not exact, but the approximation is very close for $\rho=1/4$. This example shows that using the faster than Nyqvist framework with small $\rho$, one can create signals with support outside the designated ``time slot'' of the signal, for example approximating a $\sinc$ pulse centred outside the interval $[0,(n-1)T]$. In fact, it is possible for each fixed $n>0$ to approximate any band limited signal as a faster than Nyqvist pulse if the rate $\rho$ is sufficiently small. This is the content of the following proposition.
 \begin{proposition}\label{thm:L2}
 Let $n>0$ be fixed. 
Any $\mathcal{L}_2(\mR)$ function whose Fourier Transform is restricted to the frequency band $[-W,W]$  can be approximated arbitrary close (in $\mathcal{L}_2(\mR)$) by a function 
 $X(t)$ of the form  \eqref{eq:X} for some positive number $\rho<1$.
 \end{proposition}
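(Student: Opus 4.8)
The plan is to pass to the frequency domain and reduce the statement to a density assertion about band limited functions, whose engine is the fact that a nonzero band limited function cannot vanish on an interval.

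First I would fix notation. Writing $B=[-2\pi W,2\pi W]$ for the common frequency band, the transform of $g$ is $\widehat{g}(\omega)=\sqrt{T}$ for $\omega\in B$ and $\widehat{g}(\omega)=0$ otherwise; hence every $h_k$, and every $X$ of the form \eqref{eq:X}, is band limited to $B$, and since $g$ and the $A_k$ are real, $X$ is real valued. The natural target class is therefore $\mathcal B$, the real functions in $\Ltwo$ whose Fourier transform is supported in $B$. By Parseval's identity, approximating $f$ in $\Ltwo$ is equivalent to approximating $\widehat{f}$ in $\mathcal{L}_2(B)$, where the reachable transforms are $\widehat{X}(\omega)=\sqrt{T}\sum_{k=1}^{m}A_k e^{-i\omega\rho(k-1)T}$ with $m=\lfloor n/\rho\rfloor$. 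I will show that $\mathcal B$ is contained in the $\Ltwo$-closure of $\bigcup_{0<\rho<1}V_\rho$, where $V_\rho=\mathrm{span}_{\R}\{h_1,\dots,h_m\}$. Because each approximant then lies in a \emph{single} $V_\rho$, this is exactly a statement of the required form, and the restriction to one value of $\rho$ (rather than cross-$\rho$ combinations) is precisely the point that must be respected.

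Second, I would produce the right single-$\rho$ approximants by Riemann sums. For a fixed continuous profile $a:[0,nT]\to\R$, choose $A_k=\rho T\,a(\rho(k-1)T)$; then $X_\rho=\sum_k A_k h_k\in V_\rho$ has transform equal to $\widehat{g}(\omega)$ times a Riemann sum for $\int_0^{nT}a(s)e^{-i\omega s}\,ds$. Since $B$ is compact and $|e^{-i\omega s}-e^{-i\omega s'}|\le 2\pi W|s-s'|$ uniformly for $\omega\in B$, standard Riemann-sum estimates give uniform convergence on $B$ as $\rho\to0$ (the uncovered tail near $nT$ has length $O(\rho)$ and contributes negligibly). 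Multiplying by the bounded factor $\widehat{g}$ and integrating over the bounded set $B$ converts this into $\Ltwo$-convergence $X_\rho\to a*g$. Hence every $a*g$ with $a\in C[0,nT]$ lies in the closure of $\bigcup_\rho V_\rho$, and each such $a*g$ is itself real and band limited to $B$, i.e.\ lies in $\mathcal B$.

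Finally I would prove that $\mathcal A=\{a*g:a\in C[0,nT]\}$ is dense in $\mathcal B$. Suppose $f\in\mathcal B$ is orthogonal to every $a*g$. Using that $g$ is real and even together with the identity $f*g=\sqrt{T}\,f$, which holds because $\widehat{f}$ is supported in $B$, one computes $\langle a*g,\,f\rangle=\sqrt{T}\int_0^{nT}a(s)\overline{f(s)}\,ds$. As this vanishes for all real continuous $a$, we get $f\equiv0$ on $[0,nT]$. The crux of the whole argument, and the step I expect to carry its weight, is the Paley--Wiener principle: a function in $\mathcal B$ extends to an entire function of exponential type, so vanishing on the interval $[0,nT]$ of positive measure forces $f\equiv0$. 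Thus $\mathcal A^{\perp}\cap\mathcal B=\{0\}$, so $\mathcal A$ is dense in $\mathcal B$, and combined with the Riemann-sum step this yields $\mathcal B\subseteq\overline{\bigcup_\rho V_\rho}$, proving the proposition. (If complex-valued targets are intended, the same computation applied to the real and imaginary parts covers that case once complex coefficients $A_k$ are allowed; the real-signal version above is the one matching the model.)
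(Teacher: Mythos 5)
Your proposal is correct and is essentially the paper's own argument in different clothing: your density step for $\mathcal{A}=\{a*g : a\in C[0,nT]\}$, proved via $\langle a*g, f\rangle = \sqrt{T}\int_0^{nT} a(s)\overline{f(s)}\,ds$ together with the fact that a band limited function vanishing on an interval vanishes identically, is precisely the paper's Lemma that $\overline{\mathrm{Range}(BD)} = \mathrm{Range}(B)$, which rests on the same Paley--Wiener/analyticity crux. Your single-$\rho$ Riemann-sum discretization with point samples $A_k = \rho T\, a(\rho(k-1)T)$ corresponds directly to the paper's cell-average coefficients $A_k = \int_{\rho(k-1)/(2W)}^{\rho k/(2W)} Y(t)\,dt$ and its weak convergence of the resulting pulse trains, so the two proofs coincide in structure.
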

\begin{proof}
See the Appendix.
\end{proof}

It is possible to construct examples where the energy is not localized to the interval $[0,n-1]$ also for $\rho$ larger than the Mazo limit. Consider the problem of determining the maximal energy outside the interval $\Omega_m:=[-m,m+n-1]$ for a signal $X(t)$ of the form \eqref{eq:X}, i.e., 
\begin{align*}
\max_{X(t)} & \int_{\mR\setminus \Omega_m} X(t)^2 dt ~~\mbox{ subject to } \|X(t)\|_2\le 1\\
\text{subject to } & X(t) = \sum_{k=1}^{m} A_k h_k(t)
\end{align*}
Note that this optimization problem may be written as an eigenvalue problem
\begin{equation*}
\max_{A\in \mR^{m}} A^\intercal(H-H(\Omega_m))A  \mbox{ subject to } A^\intercal HA\le 1,
\end{equation*} 
where 
\[
H(\Omega_m)_{k,\ell}=\int_{\Omega_m}h_k(t)\overline{h_\ell(t)}dt,
\]
and the maximum corresponds to the largest eigenvalue of the matrix $I-H^{-1}H(\Omega_m)$. Figure \eqref{fig:energy_outside1} shows how the maximal energy outside $\Omega_m$  depends on $m$ where $\rho\in \{0.7, 0.81, 0.9, 1\}$ for the case $n=20$. Figure \eqref{fig:energy_outside_Ex} shows an example where $\rho=0.81$ and more than $50\%$ of the energy is outside the interval $\Omega_{15}=[-15, 34]$.

\begin{figure}[!t]
\centering
\includegraphics[width=0.48\textwidth, height=0.45\textwidth]{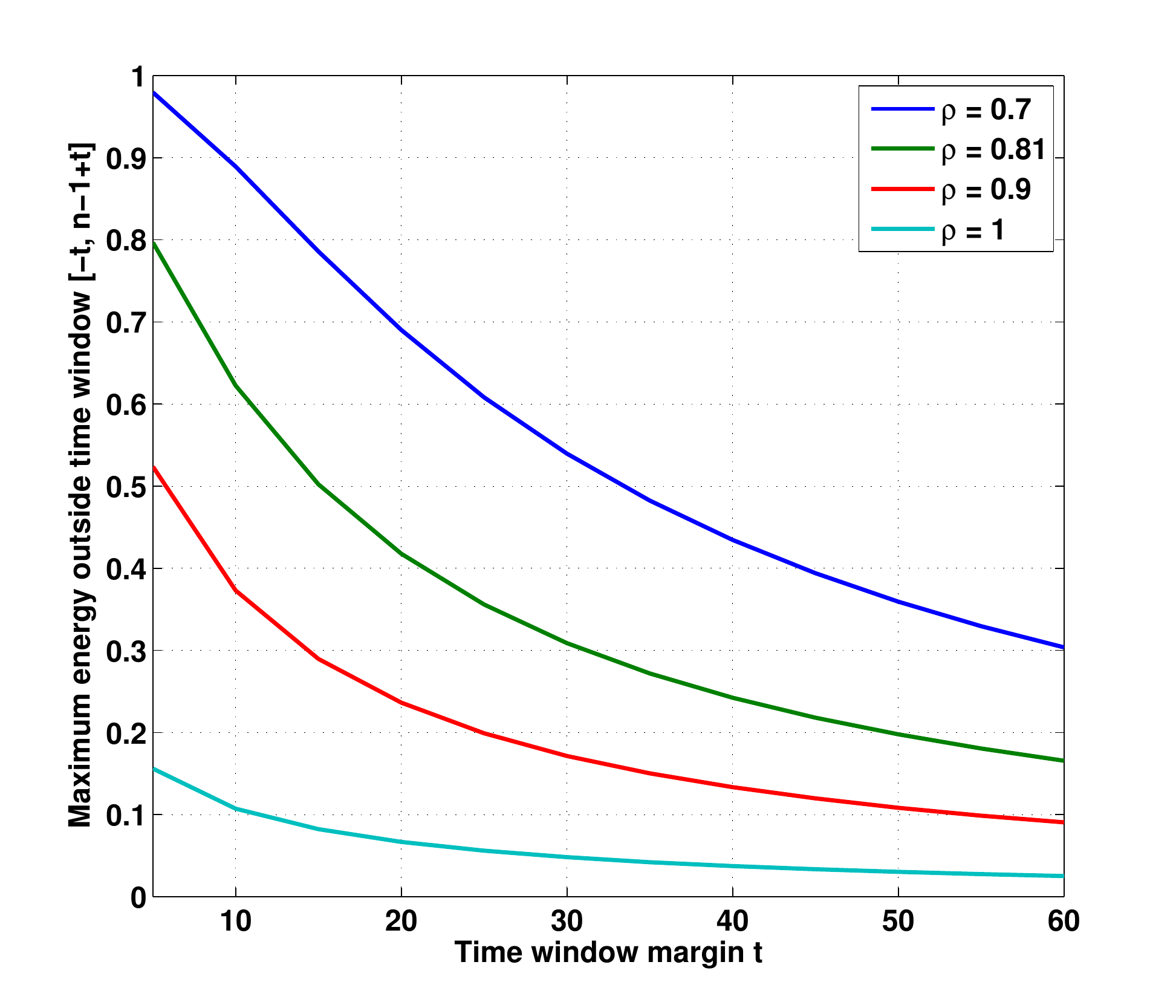}
\caption{ Maximal energy outside $\Omega_m=[-m,m+n-1]$  as a function of $m$, of a signal  \eqref{eq:X} where $\rho\in \{0.7, 0.81, 0.9, 1\}$ and where $n=20$.} \label{fig:energy_outside1}
\end{figure}

This effect may also be seen when the alphabet is constrained to a discrete set of points, e.g., for binary symbols. To illustrate this, let the signal be of the form  \eqref{eq:X} where $A_k\in \{-1, +1\}$ with $T=1$, $\rho=0.9$, and $m=400$. The signal corresponding to $A_k=(-1)^k$ is depicted in  Figure \ref{fig:energy_outside}.
The darker region of the signal, between the vertical bars, is the time $[-\rho, \rho m]$ where the signal energy is supposed to be localized. However, as can be seen a considerable proportion of the energy falls outside this region. In fact numerical integration shows that about $75\%$ of the energy is confined in the interval $t\in[-\rho , \rho m]$, and only about $27\%$ to the interval $t\in[0, \rho (m-1)]$.

\begin{figure}[!t]
\centering
\includegraphics[width=0.48\textwidth, height=0.45\textwidth]{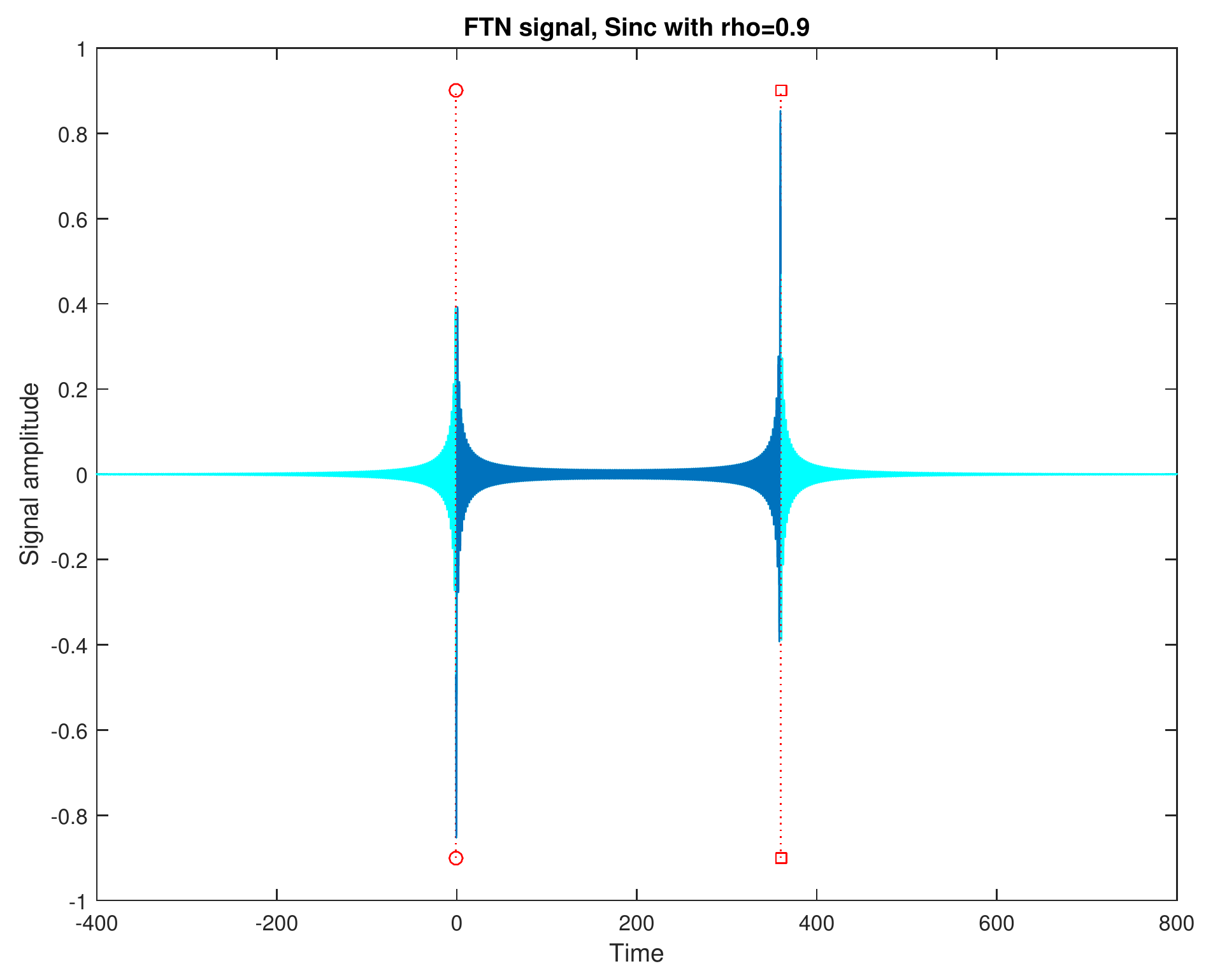}
\caption{This figure shows the resulting signal when a Sinc-pulse is transmitted with a completely alternating sequence of $A_k=(-1)^k$ for $k=1,\ldots, 400$. The darker part of the signal, surrounded by vertical bars, 
is the part $t=-\rho$ and $t=400\rho$ respectively; that is one symbol-time before/after the first/last of the main peaks. } \label{fig:energy_outside}
\end{figure}


\section{Time Localization and capacity of FTN }
\label{timeFTN}

Next, we examine the FTN model from a viewpoint of channel capacity. Assuming that the FTN model \eqref{eq:X} only contains signals that are well localized to the interval $[0,(n-1)T]$, we will see that the Shannon capacity can be violated. Again this illustrates that the FTN model contains signals that are not well localized in time.  

The energy of a signal given by (\ref{eq:X}) is
\begin{align*}
\E\|X\|^2 	&= \E \left\| \sum_{k=1}^{m} A_k h_k \right\|^2\\
			&= \E \left \langle \sum_{k=1}^{m} A_k h_k, \sum_{\ell=1}^{m} A_\ell h_\ell \right \rangle \\
			&=  \E \sum_{k=1}^{m} \sum_{\ell=1}^{m}  H_{k\ell} A_k A_\ell \\
			&= \E ~ A^\intercal H A.
\end{align*}
Let $E_s$ be the energy per sample in the case of $\rho=1$. We will restrict the energy to not exceed the energy for the $Nyquist$ signaling case, that is, the same as the energy for signaling $n$ samples. For $\rho = n/m$, we have
$$
\E ~ A^\intercal H A = m \rho E_s  = \frac{n}{\rho}\rho E_s  = n E_s
$$
which gives an average energy per sample to be
$$
 \frac{1}{m} \E~A^\intercal H A  = \rho E_s. 
$$
Since the time duration between consecutive samples is $\rho T$, the time spanned of $m$ samples is $m \rho T$. 
Now if the non-orthogonal pulses were well localized in time, we would get over an infinite time horizon the average power
\begin{equation}
\label{pow_av}
\lim_{m \rightarrow \infty} \frac{1}{m\rho T} \E ~ A^\intercal H A = \frac{\rho E_s}{\rho T} = P.
\end{equation}
Thus, the average power does not exceed that of the Nyquist signaling case.
The received signal over the AWGN channel is thus given by
$$
Y(t) = X(t) + Z(t)
$$
where $Z(t)$ is a zero mean white noise Gaussian process. We define the measurements
\begin{equation}
\label{nonorthsamp}
\begin{aligned}
Y_k &= \langle h_k, Y \rangle  \\
	 &= \langle h_k, X + Z \rangle \\
	 &= \langle h_k, \sum_{\ell=1}^{m} A_\ell h_\ell \rangle + Z_k \\
	 &= \sum_{\ell=1}^{m}  \langle h_k, h_\ell \rangle A_\ell + Z_k \\
	 &= \sum_{\ell=1}^{m}  H_{k\ell} A_\ell + Z_k.
\end{aligned}
\end{equation}
We may write (\ref{nonorthsamp}) in the more compact form
\begin{equation}
\label{sampMatrix}
Y = H A + Z.
\end{equation}
It's not hard to check that $\{Y_k\}$ is not a sequence of independent variables since $\{h_k\}$ is not an orthogonal set of functions. The covariance of
 $Z_k$ and $Z_\ell$ is given by 
\begin{align}
E\{Z_k Z_\ell\} 	&= \frac{N_0}{2} \langle h_k, h_\ell \rangle =  \frac{N_0}{2} H_{k\ell}. \label{ncov}
\end{align}
The functions $\{h_k\}$ are linearly independent according to Lemma \ref{specind},
and Proposition \ref{invgram} implies that $H\succ 0$. Let $H^{\frac{1}{2}}\succ 0$ be the unique positive definite matrix such that $H^{\frac{1}{2}}\cdot H^{\frac{1}{2}} = H$.
Then, we may write 
$$Z =  H^{\frac{1}{2}} V , ~~~ V\sim \gv(0, \frac{N_0}{2} I_m)$$
and (\ref{sampMatrix}) is equivalent to
\begin{equation}
\label{sampMatrix2}
Y = H A +  H^{\frac{1}{2}} V.
\end{equation}
Since $H$ is positive definite, it's invertible, and so is $H^{\frac{1}{2}}$.
Multiplying the left and right hand side of Equation ($\ref{sampMatrix2}$) 
by $H^{-\frac{1}{2}}$ gives
\begin{equation}
\label{sampMatrix3}
S = H^{-\frac{1}{2}} Y = H^{\frac{1}{2}} A +  V.
\end{equation}
Now let $X =  H^{\frac{1}{2}} A$, using the precoding $A = H^{-\frac{1}{2}} X$. This is always possible since $H$ is nonsingular. The energy constraint on $X$ is then
$$
\sum_{k=1}^m \E |X_k|^2 = \E |X|^2 = \E A^\intercal H A = \rho m E_s.
$$
The channel capacity over the discrete time channel
\begin{equation}\label{eq:sample4}
S_k = X_k + V_k, ~~~ V_k\sim \gv(0, \frac{N_0}{2}), ~~ k = 1, ..., m,
\end{equation}
with a sequence $\{X_k\}$ of total energy
$$
 \sum_{k=1}^{m} \E |X_k|^2  = \rho m E_s
$$
is maximized for a sequence $\{X_k\}$ of
independent identically distributed Gaussian variables with $X_k \sim \gv(0, \rho E_s)$. 
Thus, the capacity
of $m$ samples is 
\begin{align*}
C_m 	&= m\cdot \frac{1}{2} \log_2\lp 1 + \frac{2\rho E_s}{N_0}\rp \\
		&= m\cdot \frac{1}{2} \log_2\lp 1 + \frac{\rho P}{N_0W}\rp ~~ \text{bits}.
\end{align*}
Recall that the sample density is $\frac{2W}{\rho}$ samples/second. Then, the capacity per sample is $C_m/m$ and the average capacity in bits per second as time goes to infinity is
\begin{align*}
C(\rho) 	&= \lim_{m\rightarrow \infty}  \frac{2W}{\rho}  \cdot \frac{C_m}{m} \\
			&=  \frac{W}{\rho}  \log_2\lp 1 + \frac{\rho P}{N_0W}\rp ~~ \text{bits/second}.
\end{align*}

From the derivations above, we see that Mazo's model implies that the channel capacity in fact increases as we pack signals tighter in time, that is when $\rho$ decreases(note that $\rho=1$ renders  Shannon's result for orthogonal pulses). This rather unexpected (and false) result may be explained away by considering the energy localization of signals in time. The signals are not well localized in time and the limit in the expression of the average power in (\ref{pow_av}) is not equal to $P$ but is equal to a smaller number since the energy could be arbitrarily small over a time period $m\rho T$.

\section{FTN for Non Ideal Pulses}
\subsection{FTN Gramian and spectrum}
We will show that  FTN can be applied to make good use of spectrum leakage, when pulses with discontinuous spectrum are difficult to use. 
The following result is similar to results that can be found in \cite{rusek:2009}, but provides additional insight since the associated function is explicitly presented, something we will make use of later. 
\begin{proposition}[Toeplitz- and pulse-spectrum]\label{lemma:toeplitz}
Let $h(t) \in \Ltwo$, and
$$
h_k(t) = h(t - \tau_k) \xrightarrow{\mathcal{F}}  H(\omega) e^{- i\omega \tau_k} ~~~ k = 1, \ldots, n,
$$
with the restriction that $\tau_k = \tau\cdot (k-1)$ for some constant $\tau\in\R^+\backslash\{0\}$. Furthermore, let $H_n$ be such that $[H_n]_{k\ell}=\langle h_k, h_\ell \rangle$
 Then,
\begin{enumerate}[i)]
\item $H_n$ is a Toeplitz matrix $T_n(f)$ \label{lemma:toeplitz:res1}.
\item The associated function $f\in\Lone$ is given by $$f(z) = \frac{1}{\tau} \sum_{\ell=-\infty}^{\infty} \left|H\left(\frac{z+2\pi \ell}{\tau}\right)\right|^2 ,$$ where $H(\omega)$ is the Fourier transform of $h$. This holds for all $z\in[-\pi,\, \pi]$ except possibly at a set of measure zero. \label{lemma:toeplitz:res2}
\end{enumerate}
\end{proposition}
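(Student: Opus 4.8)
The plan is to establish the two claims in order: the Toeplitz structure follows from shift invariance of the inner products, and the associated symbol is then identified by combining Plancherel's theorem with a spectral folding (periodization) computation.

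For part \ref{lemma:toeplitz:res1} I would start from
\[
[H_n]_{k\ell}=\langle h_k,h_\ell\rangle=\intinf h(t-\tau_k)\overline{h(t-\tau_\ell)}\dd t
\]
and substitute $s=t-\tau_k$. Since $\tau_k=\tau(k-1)$, this becomes $\intinf h(s)\overline{h(s+\tau(k-\ell))}\dd s$, which depends on $(k,\ell)$ only through $k-\ell$. Setting $c_j=\intinf h(s)\overline{h(s+\tau j)}\dd s$ therefore gives $[H_n]_{k\ell}=c_{k-\ell}$, i.e.\ $H_n$ is Toeplitz in the sense of Definition \ref{def:toeplitz_mat}. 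By Definition \ref{def:toeplitz_fun} the associated function is precisely the one whose Fourier coefficients are these $c_j$, so part \ref{lemma:toeplitz:res2} reduces to verifying that the stated $f$ has Fourier coefficients $c_j$.

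To do this I would first move $c_j$ to the frequency domain. The pulse $h(\cdot+\tau j)$ has Fourier transform $H(\omega)e^{i\omega\tau j}$, so Plancherel's theorem yields
\[
c_j=\frac{1}{2\pi}\intinf |H(\omega)|^2 e^{-i\omega\tau j}\dd\omega .
\]
I would then compute the $j$-th Fourier coefficient $g_j=\frac{1}{2\pi}\int_{-\pi}^{\pi}e^{-ijz}f(z)\dd z$ of the candidate. Interchanging sum and integral and, in the $\ell$-th summand, substituting $\omega=(z+2\pi\ell)/\tau$ --- so that $\dd z=\tau\dd\omega$ and $e^{-ijz}=e^{-ij\tau\omega}$ because $e^{i2\pi j\ell}=1$ --- turns that summand into the integral of $\tfrac{1}{2\pi}|H(\omega)|^2 e^{-ij\tau\omega}$ over $[(2\ell-1)\pi/\tau,(2\ell+1)\pi/\tau]$. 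As $\ell$ ranges over $\Z$ these intervals tile $\R$, so the sum collapses to $\tfrac{1}{2\pi}\intinf |H(\omega)|^2 e^{-ij\tau\omega}\dd\omega=c_j$, which is the required identity. The same change of variables with $j=0$ (using $|H|^2\ge0$) gives $\int_{-\pi}^{\pi}|f(z)|\dd z=\intinf|H(\omega)|^2\dd\omega=2\pi\|h\|^2<\infty$, so $f\in\Lone$.

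The step needing the most care --- and the source of the ``except on a set of measure zero'' qualification --- is the convergence of the periodized series together with the interchange of summation and integration. Since $h\in\Ltwo$, its transform satisfies $|H|^2\in\mathcal{L}_1(\R)$, and Tonelli's theorem applied to the nonnegative terms gives $\int_{-\pi}^{\pi}\sum_\ell|H((z+2\pi\ell)/\tau)|^2\dd z=\intinf|H|^2\dd\omega<\infty$. Hence the series defining $f$ converges for almost every $z$ and represents an $\Lone$ function, and Tonelli simultaneously justifies all the interchanges above. The pointwise series may fail to converge on a null set, which is exactly why the explicit formula for $f$ is asserted only up to measure zero; as the associated function --- determined only by its Fourier coefficients --- the identity is exact. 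No regularity of $H$ beyond $|H|^2\in\mathcal{L}_1(\R)$ is used, so the result holds for every $h\in\Ltwo$.
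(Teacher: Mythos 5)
Your proof is correct and, at its core, takes the same route as the paper's: both arguments reduce part i) to the shift-invariance of the inner products, and both establish part ii) by matching the matrix entries $c_j$ with the Fourier coefficients of the folded spectrum, using Plancherel together with the tiling of $\R$ by the intervals $[(2\ell-1)\pi/\tau,\,(2\ell+1)\pi/\tau]$. The paper merely runs the computation in the opposite direction---it starts from $[H_n]_{k1}=\langle h_k,h_1\rangle$, substitutes $z=\tau\omega$, splits the real line into those intervals, and concludes by uniqueness of Fourier series---which is the identical calculation to your evaluation of the coefficients $g_j$ of the candidate symbol. The one genuine divergence is the treatment of $f\in\Lone$: the paper first argues $f(z)\ge 0$ by invoking positive semidefiniteness of the Gramian (Proposition \ref{invgram}) together with the Toeplitz spectral bounds (Proposition \ref{prop:toeplitz_spectrum}), and then computes $\int_{-\pi}^{\pi}|f(z)|\,dz = c_0 < \infty$; you instead read nonnegativity directly off the defining series, whose terms $\left|H\left((z+2\pi\ell)/\tau\right)\right|^2$ are manifestly nonnegative, and obtain integrability from Tonelli. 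Your route is more elementary and more self-contained: the paper's appeal to Proposition \ref{prop:toeplitz_spectrum} has a mild chicken-and-egg flavor (that proposition presupposes an $\Lone$ associated function, which is exactly what is being verified, and eigenvalue nonnegativity alone bounds $\sup f$ rather than $f$ pointwise without the full equal-distribution machinery), all of which your observation sidesteps. Likewise, your explicit Tonelli/Fubini justification of the sum--integral interchange, which simultaneously accounts for the ``except on a set of measure zero'' proviso, is more careful than the paper's one-line remark that the integral is absolutely convergent because $h\in\Ltwo$.
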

\begin{proof}
See the Appendix.
\end{proof}
What this proposition says is that the Gram matrix generated by a set of systematically shifted pulses is in fact Toeplitz, and the associated function $f$ is the folded spectrum of the pulses.
Given a closed form of the Fourier coefficients $c_{k-1} = [H_n]_{k1}$ we can always try to calculate $f$ directly. However, Proposition \ref{lemma:toeplitz} gives a more systematic and general framework to it.
From this result we can for example see that a transmission that is free from inter-symbol interference needs to have a folded spectrum that is constant, something that is well known in the literature \cite{proakis,goldsmith:wireless,lapidoth}.

\subsection{Precoding and capacity of the precoded signal}
Consider a pulse shape $g(t)\in \Ltwo$ with Fourier transform $G(\omega)$ where $G(\omega)\neq 0$ for $|\omega|\leq 2\pi W'$ and $G(\omega)=0$ for $|\omega|>2\pi W'$. If the transmitted time-shifted pulses $g(t-kT)$ are to be orthogonal, the time shift $T$ must be larger than $T'= \frac{1}{2W'}$, unless $g(t)$ is a $\sinc$ pulse \cite{lapidoth}. However, by using the precoding described in Section \ref{timeFTN}, we can transmit the non-orthogonal pulses $h_k(t) = g(t-kT')$ and get an easy detection algorithm (see Section \ref{timeFTN}). Then, the channel capacity becomes
\begin{align*}
C =  W' \log_2\lp 1 + \frac{P}{N_0W'}\rp ~~ \text{bits/second,}
\end{align*}
which is exactly the capacity for $\sinc$ pulses with average transmit power $P$ and bandwidth $W'$. To use this result we will have to show that 
the signal $X(t) = \sum_{1}^n A_k h_k(t)$ is well localized in time when $n$ is large enough. 

The time localization of (a large number of) $\sinc$ pulses transmitted at the Nyquist rate is well established (as defined and shown in \cite{wyner1966capacity}), and thus, we will show that as the number of pulses goes to infinity, the precoded signal approaches a train of $\sinc$ pulses transmitted at the Nyquist rate.
\begin{theorem}[Time localization of precoded signals]\label{prop:time_localization_precoded}
Let $g(t)\in \Ltwo$ be a given pulse, and let its Fourier transform $G(\omega)$ be such that $|G(\omega)|$ is piecewise continuously differentiable, $G(\omega)\neq 0$ for $|\omega|\leq 2\pi W'$, and $G(\omega)=0$ for $|\omega|> 2\pi W'$. Introduce $$ T' = \frac{1}{2W'} \ , $$
and
$$  h_k(t) =  g(t-kT')\ .$$
Let $H_{2n+1}$ be a $(2n+1)\times (2n+1)$ matrix where $[H_{2n+1}]_{k,l}=\langle h_k, h_\ell \rangle$, $k,\ell =
 -n,\dots,n$,  
$X\in \R^{2n+1}$, $A=H_{2n+1}^{-\frac{1}{2}}X$, and $X(t)$ be given by
$$X(t) = \sum_{k=-n}^{n} A_k h_k(t) \ . $$
 Then, $X(t)$ is well localized in time.
\end{theorem}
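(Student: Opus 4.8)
The plan is to exploit the Toeplitz structure of the Gram matrix together with the asymptotic-Toeplitz behaviour of its inverse square root, so as to show that the precoded signal becomes, as $n\to\infty$, a train of $\sinc$ pulses transmitted at the Nyquist rate, whose time localization is already available from Wyner's work \cite{wyner1966capacity}.

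First I would apply Proposition~\ref{lemma:toeplitz} with $\tau=T'$ to identify $H_{2n+1}$ as a Toeplitz matrix $T_{2n+1}(f)$ with associated function
$$f(z)=\frac{1}{T'}\sum_{\ell=-\infty}^{\infty}\left|G\!\left(\frac{z+2\pi\ell}{T'}\right)\right|^2 .$$
Since $G$ is supported exactly on $[-2\pi W',2\pi W']$ and the choice $T'=1/(2W')$ makes the folding period $2\pi/T'=4\pi W'$ coincide with the support width, only the $\ell=0$ term survives for $z\in(-\pi,\pi)$, so that $f(\omega T')=2W'|G(\omega)|^2$ on the band. The hypothesis $G(\omega)\neq 0$ for $|\omega|\le 2\pi W'$ then gives $f(z)>0$ on $[-\pi,\pi]$, while the piecewise-$C^1$ assumption on $|G|$ makes $\sqrt{f}$ piecewise continuously differentiable and continuous across $z=\pm\pi$ (which holds when $|G|$ is even, as for real pulses), hence in the Wiener class. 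This is exactly the hypothesis of Proposition~\ref{prop:sqrt_inv_toep}, which I would invoke to conclude that $H_{2n+1}^{-\frac12}$ is asymptotically Toeplitz with associated function $1/\sqrt{f(z)}$; that is, $H_{2n+1}^{-\frac12}=\tau_{2n+1}+D_{2n+1}$ with $\tau_{2n+1}=T_{2n+1}(1/\sqrt{f})$ and $\frac{1}{2n+1}\Tr(D_{2n+1}^*D_{2n+1})\to 0$.

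Next I would compute the spectrum of the precoded signal. Writing $\hat A(\theta)=\sum_k A_k e^{-ik\theta}$ and $\hat X(\theta)=\sum_k X_k e^{-ik\theta}$, the signal has Fourier transform $G(\omega)\hat A(\omega T')$, and replacing $H_{2n+1}^{-\frac12}$ by its Toeplitz part (symbol $1/\sqrt{f}$) gives the effective transform
$$\frac{G(\omega)}{\sqrt{f(\omega T')}}\,\hat X(\omega T')=\frac{1}{\sqrt{2W'}}\,e^{\,i\arg G(\omega)}\,\hat X(\omega T').$$
The crucial point is that the magnitude $|G(\omega)|/\sqrt{f(\omega T')}=1/\sqrt{2W'}$ is constant on the band and zero outside: the precoding flattens the folded spectrum, so the effective pulse satisfies the Nyquist orthogonality criterion and the precoded signal lies in the Paley--Wiener space $PW_{2\pi W'}$ with the same per-basis-function energy profile as a Nyquist-rate $\sinc$ train. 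I would then transfer Wyner's localization for $\sinc$ trains to the idealized part $\sum_k(\tau_{2n+1}X)_k h_k$, which is well localized to an interval of length $\approx 2nT'$. The residual all-pass factor $e^{i\arg G(\omega)}$ reduces to the identity (making the effective pulse exactly $\sinc$) when $G$ has constant phase, and in general is controlled by the smoothness of $|G|$, which governs the time decay of the effective pulse and is absorbed by a fixed margin in the localization interval.

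The main obstacle will be turning these two approximations into a genuine bound on the ratio $K_{\rm D}(X,\Omega)$. One has the exact identity $\|X(t)\|^2=A^\intercal H_{2n+1}A=X^\intercal X=\|X\|^2$, and the energy of the discrepancy between the true precoded signal and its Toeplitz idealization is $(D_{2n+1}X)^\intercal H_{2n+1}(D_{2n+1}X)\le \sup_z f(z)\,\|D_{2n+1}X\|^2$. Because Proposition~\ref{prop:sqrt_inv_toep} controls only $\Tr(D_{2n+1}^*D_{2n+1})$ (a Frobenius, not operator-norm, bound), this discrepancy cannot be made small for an adversarial $X$; it is, however, naturally controlled in the setting of Section~\ref{timeFTN}, where $X$ has i.i.d.\ entries of per-sample variance $\sigma^2$, for then $\E\|D_{2n+1}X\|^2=\sigma^2\,\Tr(D_{2n+1}^*D_{2n+1})=o(n)$ whereas $\E\|X\|^2=\sigma^2(2n+1)=\Theta(n)$, so the relative energy error vanishes. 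Combining this with Wyner's localization of the idealized train and the margin argument above yields $K_{\rm D}(X,\Omega)\to 1$, i.e.\ the precoded signal is well localized in time.
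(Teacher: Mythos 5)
Your proposal follows the paper's proof through its entire core chain --- Proposition~\ref{lemma:toeplitz} to identify $H_{2n+1}=T_{2n+1}(f)$ with the folded spectrum collapsing to the $\ell=0$ term under $T'=\frac{1}{2W'}$, the piecewise-$C^1$ hypothesis on $|G|$ to place $\sqrt{f}$ in the Wiener class, Proposition~\ref{prop:sqrt_inv_toep} to make $H_{2n+1}^{-\frac12}$ asymptotically Toeplitz with symbol $1/\sqrt{f}$, and a final reduction to a Nyquist-rate $\sinc$ train whose localization is imported from Wyner --- so the skeleton is identical. Where you genuinely diverge is the finishing step. The paper argues symbol by symbol: it writes $X(t)=\sum_\ell X_\ell\,\xi_\ell^{(2n+1)}(t)$ and shows that the spectrum of each effective pulse $\xi_\ell^{(2n+1)}$ converges pointwise, as $n\to\infty$, to the flat $\sqrt{T'}e^{i\ell\omega T'}$ on the band, i.e.\ each effective pulse tends to a shifted $\sinc$; it never quantifies the discrepancy between $H_{2n+1}^{-\frac12}$ and its Toeplitz part. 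You instead work with the aggregate signal, split $H_{2n+1}^{-\frac12}=\tau_{2n+1}+D_{2n+1}$, and confront head-on the fact that Proposition~\ref{prop:sqrt_inv_toep} controls $D_{2n+1}$ only in normalized Frobenius norm --- which, as you correctly observe, cannot bound the residual energy $(D_{2n+1}X)^\intercal H_{2n+1}(D_{2n+1}X)$ for an adversarial deterministic $X$, forcing you to assume i.i.d.\ symbols so that $\E\|D_{2n+1}X\|^2=\sigma^2\Tr(D_{2n+1}^*D_{2n+1})=o(n)$ against $\E\|X\|^2=\Theta(n)$. This buys a quantitative, honest error bound that the paper's proof lacks, at the price of establishing a weaker (stochastic, in-expectation) version of the statement. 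Note, though, that the paper's per-pulse route has the mirror-image gap: the convergence of $\xi_\ell^{(2n+1)}$ is not uniform in $\ell$ (symbols near the block edges see a one-sided truncated sum, and the paper silently replaces the row coefficients $c_{l+n+1,k}$ by $c_{n+1,k}$), so it does not cover worst-case $X$ either; your restriction to random symbols matches the setting of Section~\ref{timeFTN} in which the theorem is actually used.

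One sub-claim needs repair: you attribute control of the residual all-pass factor $e^{i\arg G(\omega)}$ to the smoothness of $|G(\omega)|$. Smoothness of the modulus says nothing about the phase, and a wildly varying $\arg G$ can destroy time localization of the effective pulse even though its magnitude spectrum is flat. The paper sidesteps this silently by treating $G$ as real: it writes $f(z)=\frac{1}{T'}G(z/T')^2$ without a modulus and divides by $G(\omega)$ rather than $|G(\omega)|$, so the phase cancels exactly --- strictly, the symbol of $H_{2n+1}^{-\frac12}$ is $\sqrt{T'}/|G(z/T')|$, and your identification of the leftover factor $e^{i\arg G(\omega)}$ is sharper than the paper's computation. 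To close your argument you should either assume $G$ real (or of linear phase, i.e.\ a delayed real pulse, as holds for the root-raised-cosine example), or impose a regularity hypothesis on $\arg G$ directly rather than on $|G|$.
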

\begin{proof}
See the Appendix.
 \end{proof}

\subsection{Complexity and implementation}
In general the problem of estimating the transmitted sequence is NP-hard \cite{verdu} in the case of inter-symbol interference and usually, one has to rely on the Viterbi algorithm which has exponential complexity \cite{proakis}.

However, the precoding scheme that has been suggested in Section~\ref{timeFTN} reduces the problem to an independent estimation for each sample, which is a much easier problem. A block diagram of the process can be found in Figure \ref{fig:LinPrecodeAWGNchannel}. Starting from the transmitter side we take the discrete data $X$ as input. This is precoded using $A=H^{-1/2}X$ and sent through a pulse shaping filter \eqref{eq:X} over the AWGN-channel. On the receiver side this is sampled using a matched filter, these samples are decoded to give the discrete output samples $S$, which according to \eqref{eq:sample4} are decoupled such that $S_k$ only depends on $X_k$.

\begin{figure}[!t]
\centering
\includegraphics[width=0.48\textwidth, height=0.30\textwidth]{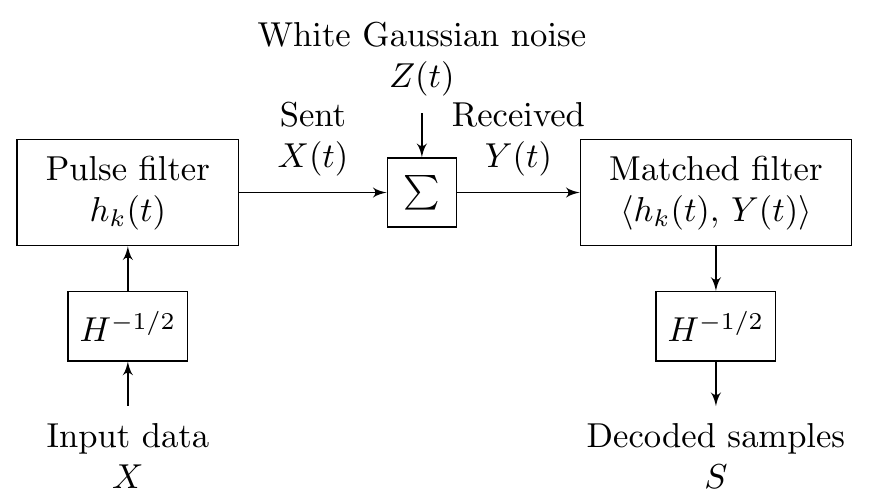}
\caption{Block diagram of the precoding and decoding; with transmission over an AWGN-channel. From independent discrete input data $X$ to independent discrete samples $S$} \label{fig:LinPrecodeAWGNchannel}
\end{figure}

For many cases the algorithm can be implemented efficiently for large $n$ utilizing the fact that the Toeplitz matrix $H$ is asymptotically circulant \cite{gray}. For such cases, $H$ can be approximated by a circulant matrix and the algorithm can be well approximated by an implementation using FFT with complexity $O\left(n\log(n)\right)$. 
Note that if $\rho$ is selected to be too small, then the eigenvalues of 
the Toeplitz matrix $H$ will also be small and this approximate approach will break down.
The implementation used in following subsection is done using both this method and an implementation based on full size matrices, and the results are identical for all the simulations.

\subsection{Example - Root-Raised-Cosine}
We exemplify the theory by applying it to the root-raised-cosine pulse, from this it is also easy to see how it works for other pulses.

As before, let $T=\frac{1}{2W}$, and introduce
\begin{align}\label{eqn:rtrc}
g_{\beta}(t) = \frac{4\beta}{\pi\sqrt{T}} \cdot \frac{ \cos\left( (1+\beta)\pi\frac{t}{T} \right) +\frac{\sin\left( (1-\beta)\pi\frac{t}{T} \right)}{4\beta\frac{t}{T}} }{ 1-\left( 4\beta\frac{t}{T} \right)^2 }\ ,
\end{align}
defined for $\beta\in[0,\,1]$. Let the Fourier transform of $g_{\beta}(t)$ be
\begin{align}\label{eqn:fourier_rtrc}
G_{\beta}(\omega) =\left\{\begin{aligned}
&\sqrt{T} &\quad& 0\leq|\omega|\leq(1-\beta)\frac{\pi}{T}\\
&\text{\eqref{eqn:fourier_rtrc_full_form}} && (1-\beta)\frac{\pi}{T} <|\omega|\leq(1+\beta)\frac{\pi}{T} \\
&0 && |\omega|>(1+\beta)\frac{\pi}{T},
\end{aligned}\right.
\end{align}
with
\begin{align}\label{eqn:fourier_rtrc_full_form}
\sqrt{\frac{T}{2}}\sqrt{1- \sin\left( \frac{T}{2\beta}\left(|\omega| - \frac{\pi}{T}\right) \right)} 
\end{align}

We can see that the pulse is a roll-off pulse that is defined with a spectrum leakage $\beta$, the transform has support in linear frequency on an interval of size $W' = W(1+\beta)$. It is also a Nyquist pulse, since with $T$ as chosen above the pulses are orthogonal.
When applying FTN to the root-raised-cosine we arrive at the following result.
\begin{lemma}[Associated function to the Root-raised cosine Gramian]\label{lemma:toeplitz_spectrum_rtrc}
Let $h_k(t) = \sqrt{\rho} \cdot g_{\beta}\lp t - \rho \frac{k-1}{2W}\rp$, with $g_\beta$ given by \eqref{eqn:rtrc}, $k=1,2,\dots,n$, and $\rho\leq1$. Moreover let $H_n$ be the Gramian of these pulses as given by Proposition \ref{invgram}. Then the associated function $f$ to $H_n$ is given by
\begin{align}\label{eqn:f_rtrc}
&f(z) = \left\{\begin{aligned}
&\eqref{eqn:f_rtrc_leq} && \text{If } (1+\beta) \rho \leq 1& \\
&\eqref{eqn:f_rtrc_geq} && \text{If } (1+\beta) \rho \geq 1,&
\end{aligned}\right.
\end{align}
with
\begin{align}\label{eqn:f_rtrc_leq}
 \left\{
\begin{array}{l l}
0 & z\in [-\pi,\, -(1+\beta)\rho\pi]\\
\frac{1}{2}\left(  1 + \sin\left( \frac{z+\rho\pi}{2\beta\rho} \right)  \right) &  z\in [-(1+\beta)\rho\pi, \, -(1-\beta)\rho\pi]\\
1  & z\in [-(1-\beta)\rho\pi,\, (1-\beta)\rho\pi]\\
\frac{1}{2}\left(  1 - \sin\left( \frac{z-\rho\pi}{2\beta\rho} \right)  \right)  & z\in [(1-\beta)\rho\pi,\, (1+\beta)\rho\pi]\\
0  & z\in [(1+\beta)\rho\pi, \, \pi]\\
\end{array}\right.
\end{align}
and
\begin{align}\label{eqn:f_rtrc_geq}
\left\{
\begin{array}{l}
1-\sin\left( \frac{\pi(1-\rho)}{2\beta\rho} \right) \cos\left( \frac{z+\pi}{2\beta\rho} \right)   \\
\hfill z\in [-\pi,\, -(2-(1+\beta)\rho)\pi]\\
\frac{1}{2}\left(  1 + \sin\left( \frac{z+\rho\pi}{2\beta\rho} \right)  \right)   \\
\qquad\qquad\quad z\in [ -(2-(1+\beta)\rho)\pi, \, -(1-\beta)\rho\pi]\\
1   \\
\hfill z\in [-(1-\beta)\rho\pi,\, (1-\beta)\rho\pi]\\
\frac{1}{2}\left(  1 - \sin\left( \frac{z-\rho\pi}{2\beta\rho} \right)  \right)   \\
\hfill z\in [(1-\beta)\rho\pi,\, (2-(1+\beta)\rho)\pi]\\
1-\sin\left( \frac{\pi(1-\rho)}{2\beta\rho} \right) \cos\left( \frac{z-\pi}{2\beta\rho} \right)   \\
\hfill z\in [(2-(1+\beta)\rho)\pi, \, \pi].\\
\end{array}\right.
\end{align}
\end{lemma}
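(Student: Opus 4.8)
The plan is to invoke Proposition \ref{lemma:toeplitz} directly, since the pulses $h_k$ form exactly a systematically shifted family. With $T=\frac{1}{2W}$, the underlying pulse is $h(t)=\sqrt{\rho}\,g_\beta(t)$, whose Fourier transform is $H(\omega)=\sqrt{\rho}\,G_\beta(\omega)$, and the shift is $\tau_k=\rho T(k-1)$, i.e.\ $\tau=\rho T$. Part \ref{lemma:toeplitz:res1} of Proposition \ref{lemma:toeplitz} then gives that $H_n$ is Toeplitz, and part \ref{lemma:toeplitz:res2} gives the associated function as the folded spectrum
$$ f(z)=\frac{1}{\rho T}\sum_{\ell=-\infty}^{\infty}\left|\sqrt{\rho}\,G_\beta\!\left(\frac{z+2\pi\ell}{\rho T}\right)\right|^2 = \frac{1}{T}\sum_{\ell=-\infty}^{\infty}\left|G_\beta\!\left(\frac{z+2\pi\ell}{\rho T}\right)\right|^2 . $$
The factor $\sqrt{\rho}$ cancels, which is precisely why the pulses were normalized this way, and it remains to evaluate this sum on $z\in[-\pi,\pi]$.

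Next I would determine which terms are nonzero. By \eqref{eqn:fourier_rtrc}, $G_\beta(\omega)=0$ unless $|\omega|\le(1+\beta)\pi/T$, so the $\ell$-th term is supported on $|z+2\pi\ell|\le(1+\beta)\rho\pi$. Since $0<\rho\le1$ and $\beta\in[0,1]$ we always have $(1+\beta)\rho\le2$, so for $z\in[-\pi,\pi]$ only $\ell\in\{-1,0,1\}$ can contribute; this is exactly what separates the two regimes of the statement. If $(1+\beta)\rho\le1$ then $(1+\beta)\rho\pi\le\pi$ and the supports of the $\ell=\pm1$ terms meet $[-\pi,\pi]$ only at its endpoints (a set of measure zero), so $f(z)=\frac{1}{T}|G_\beta(z/(\rho T))|^2$ throughout. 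Substituting the three branches of \eqref{eqn:fourier_rtrc} and rescaling the argument — using $\frac{T}{2\beta}\big(\tfrac{|z|}{\rho T}-\tfrac{\pi}{T}\big)=\frac{|z|-\rho\pi}{2\beta\rho}$ and tracking the sign of $z$ inside $|\cdot|$ — produces exactly the five pieces of \eqref{eqn:f_rtrc_leq}.

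When $(1+\beta)\rho\ge1$ aliasing occurs: the $\ell=0$ term now covers all of $[-\pi,\pi]$, while the $\ell=1$ (resp.\ $\ell=-1$) term contributes only on the outer interval $z\in[-\pi,-(2-(1+\beta)\rho)\pi]$ (resp.\ $z\in[(2-(1+\beta)\rho)\pi,\pi]$). On the three central intervals only $\ell=0$ survives and I recover the same middle pieces as before, while on the two outer intervals two roll-off contributions add. On the right end, for instance, the $\ell=0$ term gives $\tfrac12\big(1-\sin\tfrac{z-\rho\pi}{2\beta\rho}\big)$ and the $\ell=-1$ term gives $\tfrac12\big(1-\sin\tfrac{(2-\rho)\pi-z}{2\beta\rho}\big)$; applying $\sin A+\sin B=2\sin\tfrac{A+B}{2}\cos\tfrac{A-B}{2}$ collapses their sum to $1-\sin\!\big(\tfrac{\pi(1-\rho)}{2\beta\rho}\big)\cos\!\big(\tfrac{z-\pi}{2\beta\rho}\big)$, matching \eqref{eqn:f_rtrc_geq}, and the left interval follows by evenness of $|G_\beta|$. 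The main obstacle is purely the bookkeeping: locating the breakpoints $\pm(2-(1+\beta)\rho)\pi$ where the aliased term switches on, verifying that at those breakpoints the folded expression sits in its roll-off branch rather than its flat branch (this reduces to the inequalities $\rho\le1$ and $(1+\beta)\rho\ge1$), and carrying the sign of the argument through the even function $|G_\beta|$ so that the two sine terms combine with the correct signs. Checking continuity of $f$ at every breakpoint is a convenient consistency test.
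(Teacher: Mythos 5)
Your proposal is correct, but it takes a genuinely different route from the paper's own proof. The paper proceeds ``by direct calculations'': it first writes the Gramian entries in closed form as samples of a raised cosine, $[H_n]_{m\ell}=\rho\,\sinc(\rho(m-\ell))\cos(\pi\beta\rho(m-\ell))/(1-4\beta^2\rho^2(m-\ell)^2)$, splits these Fourier coefficients into a product $c_k^1c_k^2$, and then recovers $f$ via Parseval's identity as the periodic convolution of the two corresponding functions: $g_1$, a sum of rectangle functions whose shape already encodes the case split $(1+\beta)\rho\lessgtr 1$ through the wrapping of the wider rect past $\pm\pi$, and $g_2$, a sinusoid identified by matching the coefficients $-1/(4\beta^2\rho^2k^2-1)$; the convolution integral is then evaluated piecewise. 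You instead apply Proposition \ref{lemma:toeplitz} and evaluate the folded spectrum $f(z)=\frac{1}{T}\sum_{\ell}\bigl|G_\beta\bigl((z+2\pi\ell)/(\rho T)\bigr)\bigr|^2$ directly, with the case split emerging from which aliases $\ell=\pm1$ meet $[-\pi,\pi]$, and the outer-interval formula obtained by a sum-to-product identity. I checked your key steps and they are sound: the $\sqrt{\rho}$ normalization indeed cancels (consistent with $c_0=\rho=[H_n]_{kk}$), the breakpoints $\pm(2-(1+\beta)\rho)\pi$ are where the aliased terms switch on, the condition $\rho\le 1$ is exactly what guarantees $(1-\beta)\rho\pi\le(2-(1+\beta)\rho)\pi$ so the $\ell=0$ term sits in its roll-off branch on the outer intervals (this is the paper's ``technical assumption''), and with $A=\frac{z-\rho\pi}{2\beta\rho}$, $B=\frac{(2-\rho)\pi-z}{2\beta\rho}$ the identity $\sin A+\sin B=2\sin\frac{A+B}{2}\cos\frac{A-B}{2}$ collapses the sum to $1-\sin\bigl(\frac{\pi(1-\rho)}{2\beta\rho}\bigr)\cos\bigl(\frac{z-\pi}{2\beta\rho}\bigr)$ as claimed. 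Interestingly, the paper itself remarks after Proposition \ref{lemma:toeplitz} that one ``can always try to calculate $f$ directly'' from closed-form coefficients while the proposition ``gives a more systematic and general framework,'' yet its own proof of this lemma takes the direct convolution path; your route is the systematic one the proposition was built for, and it is shorter and makes the aliasing interpretation transparent, whereas the paper's computation is independent of the folded-spectrum formula and thus serves as a cross-check --- the two derivations agree on every branch.
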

\begin{proof}
See the Appendix.
\end{proof}

By inspecting Lemma \ref{lemma:toeplitz_spectrum_rtrc} in the context of Proposition \ref{prop:toeplitz_spectrum}, we can see that we only have a numerically well behaved Gramian matrix and numerical stability if $(1+\beta)\rho\geq1$. This is indeed also true for all roll-off pulses that utilize extra bandwidth $\beta$, which can be concluded from Proposition \ref{lemma:toeplitz}.

We can see that with $\rho = \frac{1}{1+\beta}$ and $T' = \rho T$, the root-raised-cosine fulfills all prerequisites in Theorem \ref{prop:time_localization_precoded}. Hence, for a fixed average power $P$ the capacity for precoded FTN with root-raised-cosine becomes
\begin{align*}
C =  (1+\beta)W \log_2\lp 1 + \frac{P}{N_0W(1+\beta)}\rp ~~ \text{bits/second.}
\end{align*}

\subsection{Simulation results}

As a proof of concept, we have conducted simulations where the suggested precoding is applied and implemented, using FFT approximation as described above.
These simulations are based on the model presented in \eqref{sampMatrix2}. 

We apply the root-raised-cosine function as a pulse shape and use a roll-off factor $\beta=0.22$ to mimic existing standards as in \cite{3gpp_rtrc}. 
We are only looking at binary input-output and don't use any higher order modulation.

These simulations are such that we keep the time for a block constant, with the reference that in the Nyquist case ($\rho=1$) one block should be 4000 physical bits. The Nyquist transmission also serves as a reference in the sense that these pulses are of unit energy. We are keeping the power constant for all the schemes, meaning that FTN must use a lower energy per physical bit. This in turn relates to the SNR given in the plots, as this is a Nyquist reference type of SNR in order to be able to make a fair comparison between the different schemes. The SNR is given in dB and relates to the standard deviation of the sampled noise as $$\sigma = 10^{-\text{SNR}_\text{dB}/20} ,$$ since Nyquist transmissions applied unit energy per physical bit. This standard deviation is also used for FTN transmissions regardless of the fact that FTN is using lower energy per physical bit, so the experienced SNR per physical bit will be worse than what is actually written on the axes in the FTN case. This comparison is motivated since it allows us to judge, for a given channel state, if it is worth changing a Nyquist scheme for FTN.

In addition, this simulation also applies (WCDMA) turbo codes according to \cite{3gpp_WCDMA_turbo}. The coding is applied on the payload bits to create the physical bits, $X$, on which we then apply the precoding $H^{-1/2}$. Similarly at the receiver we fist apply FTN decoding, to produce $S$, before using turbo decoding which gives the estimated payload bits. 
For every value of $\rho$ and SNR presented, we have looked at code rates from $1/3$ up to $\approx0.96$ in a grid containing 18 points  (exact code rates are rounded due to finite block length). From this, we have then computed the throughput as:
\begin{align*}
\text{throughput}\,=(1-\text{block-error rate})\cdot\, \#\text{payload bits}  ,
\end{align*}
and selecting the code rate giving highest throughput.
The throughput is given as bits per equivalent time unit and the reference is, as mentioned, the Nyquist scheme transmits $4000$ physical bits, the exact bandwidth is then just a scaling from this.
\begin{figure}[!t]
\centering
\includegraphics[width=0.48\textwidth, height=0.45\textwidth]{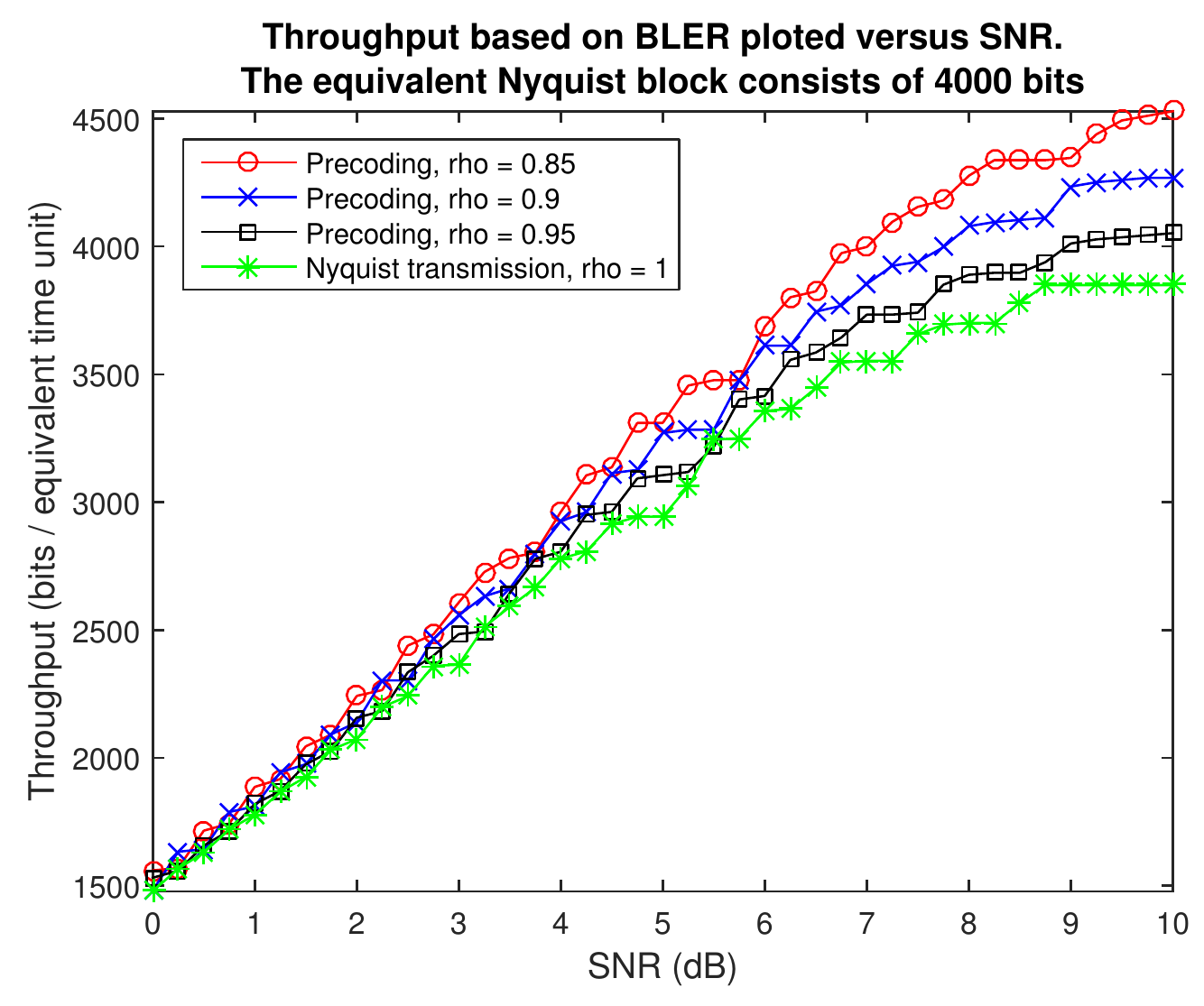}
\caption{Throughput as given by $\text{throughput}\,=(1-\text{block error rate})\cdot\, \#\text{payload bits}$. One can observe that the FTN scheme seems to need a higher SNR in order to reach better throughput, and the lower the $\rho$, the higher the SNR. However, once this is reached it starts to outperform the Nyquist case. The stagnation in throughput at high SNR is due to the code rate not going higher than $\approx0.96$.} \label{fig:BLER_throughput}
\end{figure}

Another way of calculating the throughput, although less attractive for practical purposes, is to rely on the bit-error rate instead of the block-error rate.
The resulting throughput can be found in Figure \ref{fig:BER_throughput}. It should however be noted that this optimization resulted in all schemes using the highest possible code rate for all SNR. 
\begin{figure}[!t]
\centering
\includegraphics[width=0.48\textwidth, height=0.45\textwidth]{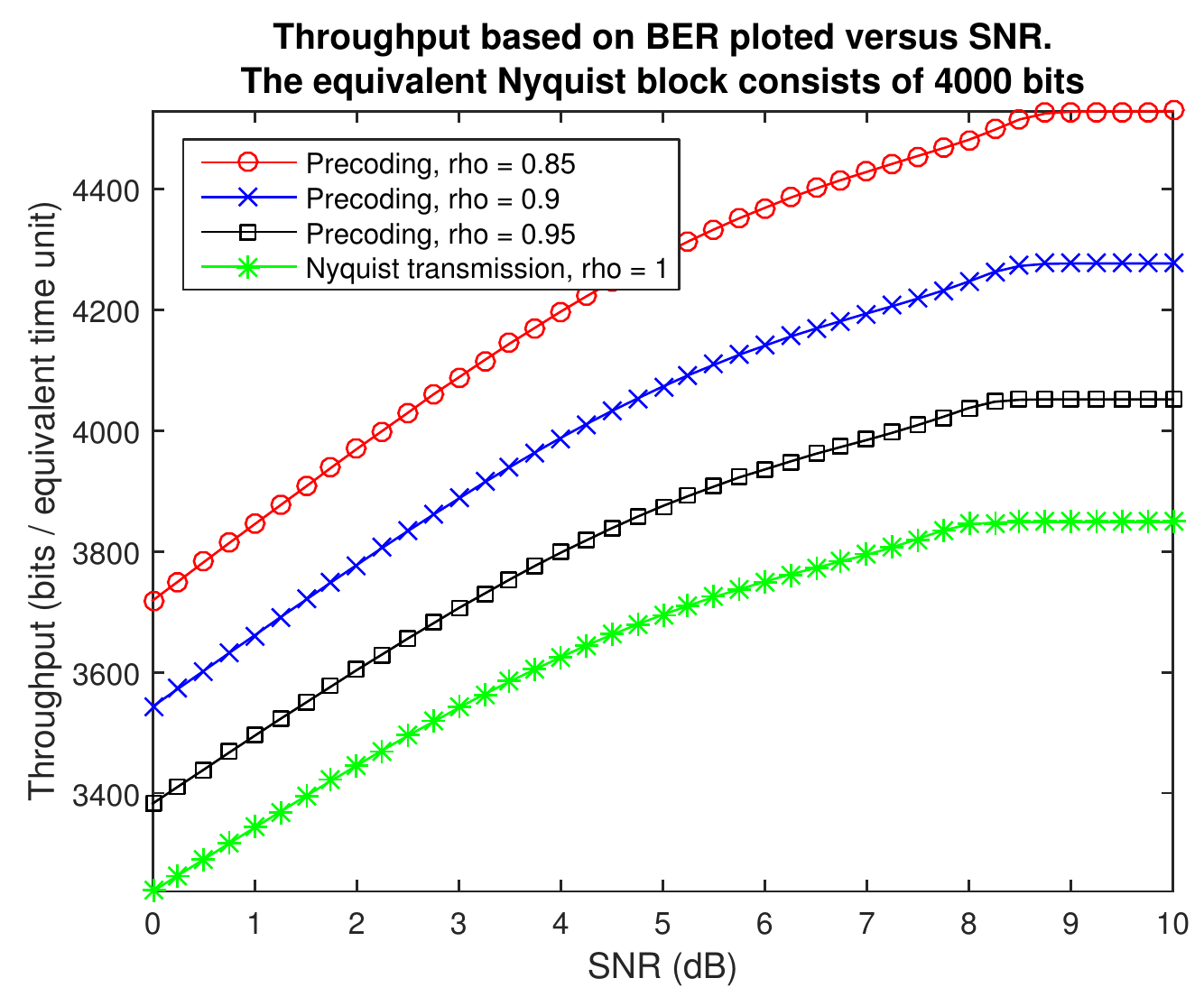}
\caption{Throughput as given by $\text{throughput}\,=(1-\text{bit-error rate})\cdot\, \#\text{payload bits}$. This plot shows throughput gains using FTN even at low SNR, although this is less informative from a system perspective. The same stagnation in throughput at high SNR as in Figure \ref{fig:BLER_throughput} can also be seen here, we have reached $\text{BER}\,=0$.} \label{fig:BER_throughput}
\end{figure}



\section{Conclusions}
We considered communication over the analog white Gaussian noise channel using a finite bandwidth 
$[-W,W]$ and \textit{non-orthogonal} pulses by signaling at a rate that is higher than the Nyquist rate.
We showed that the conclusions in \cite{mazo:FTN}, that one may transmit symbols carried by $\sinc$ pulses at a higher rate than that dictated by Nyquist without loosing in bit error rate don't imply that the bit error rate per \textit{time} unit decreases. This was demonstrated by showing that if the model in \cite{mazo:FTN} is valid to consider bit error rates per time unit, then it means that non-orthongal signals may achieve a capacity for the AWGN channel that is higher than the Shannon capacity. We explain this phenomenon by means of an example where we show that non-orthogonal signals do not give rise to well localized energy in time. Thus, it's not physically correct to talk about bits per \textit{second}, as the energy of non-orthogonal signals may be more spread over time. Therefore, the results of Mazo \cite{mazo:FTN} do not imply that one can transmit more data \textit{per time unit} without degrading performance in terms of error probability.

We also considered FTN signaling in the case of pulses that are different from the $\sinc$ pulses. We showed that one may use a precoding scheme of low complexity, in order
to remove the inter-symbol interference. This leads to the possibility of increasing the number of transmitted samples per time unit and compensate for spectral efficiency losses due to signaling at the Nyquist rate of the non $\sinc$ pulses. Thus we can achieve the Shannon capacity when the same energy is spent on transmitting with the ideal $\sinc$ pulses. 

\section{Acknowledgements}
This work was in part supported by Bitynamics Research, Ericsson Research, and the ACCESS Linnaeus Centre at KTH. 

\bibliography{additional_refs,mybib}


\appendix

\subsection{Proof of Lemma \ref{specind}}
To prove the statement, we need the following result.
\begin{proposition}
\label{indexp}
Let $\Omega \subset \R$ with $\mathbf{int} (\Omega) \neq \emptyset$ and
 $\alpha_1, ..., \alpha_n\in \C$ with $\alpha_i \neq \alpha_j$ for $i\neq j$. 
Then, 
$e^{\alpha_1 \omega }, \ldots, e^{\alpha_n \omega}$ are linearly independent for $\omega \in \Omega$.
\end{proposition}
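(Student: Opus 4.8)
The plan is to reduce the statement to the vanishing of a linear combination on all of $\R$ and then exploit the Vandermonde structure that appears upon differentiation. First I would use the hypothesis $\mathbf{int}(\Omega)\neq\emptyset$: it guarantees that $\Omega$ contains an open interval, and in particular a point that is a limit of other points of $\Omega$. Each function $\omega\mapsto e^{\alpha_j\omega}$ extends to an entire function of a complex variable, and hence so does any finite linear combination $\phi(\omega)=\sum_{j=1}^n c_j e^{\alpha_j\omega}$. If such a $\phi$ vanishes on $\Omega$, then it vanishes on a set having a limit point, so by the identity theorem for analytic functions $\phi\equiv 0$ on all of $\C$, in particular on $\R$. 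Thus it suffices to show that $\sum_{j=1}^n c_j e^{\alpha_j\omega}=0$ for every $\omega$ forces $c_1=\cdots=c_n=0$.

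To finish I would differentiate $\phi$ repeatedly and evaluate at the origin. Since $\phi^{(k)}(\omega)=\sum_{j=1}^n c_j\alpha_j^k e^{\alpha_j\omega}$, setting $\omega=0$ yields
\begin{equation*}
\sum_{j=1}^n c_j\alpha_j^{\,k}=0, \qquad k=0,1,\ldots,n-1.
\end{equation*}
This is a homogeneous linear system $Vc=0$ in the unknown vector $c=(c_1,\ldots,c_n)^\intercal$, whose coefficient matrix $V=(\alpha_j^{\,k})_{k,j}$ is a Vandermonde matrix with determinant $\prod_{1\le i<j\le n}(\alpha_j-\alpha_i)$. Because the $\alpha_j$ are pairwise distinct this determinant is nonzero, so $V$ is invertible and $c=0$, which establishes the linear independence.

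The hard part --- really the only place where the hypotheses are used in an essential way --- is the passage from ``vanishes on $\Omega$'' to ``vanishes identically,'' which is exactly where $\mathbf{int}(\Omega)\neq\emptyset$ enters; the conclusion would fail if $\Omega$ were, say, a finite set of at most $n-1$ points. I would therefore state the identity-theorem step carefully, noting that a set with nonempty interior contains a limit point, which is all that is needed. As a purely real-variable alternative that avoids analytic continuation, one can argue by induction on $n$: after dividing the relation by $e^{\alpha_n\omega}$ and differentiating once on the interval contained in $\Omega$, the index-$n$ term disappears and the remaining exponents $\alpha_j-\alpha_n$ for $j<n$ are still distinct, so the inductive hypothesis gives $c_j(\alpha_j-\alpha_n)=0$, whence $c_j=0$ for $j<n$ and then $c_n=0$. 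I would present the Vandermonde argument as the main line, since it is the shortest and makes the role of the distinctness of the $\alpha_j$ most transparent.
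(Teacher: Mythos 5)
Your proof is correct, but its main line is genuinely different from the paper's. The paper argues by induction on $n$ entirely in the real variable: assuming a dependence relation on $\Omega$, it multiplies by $e^{-\alpha_k\omega}$, differentiates once on $\mathbf{int}(\Omega)$ to kill the constant term, multiplies back, and invokes the induction hypothesis to force all coefficients to vanish --- this is exactly the ``purely real-variable alternative'' you sketch in your last paragraph. Your primary argument instead extends $\phi(\omega)=\sum_{j}c_j e^{\alpha_j\omega}$ to an entire function, uses the identity theorem (vanishing on a set with a limit point, which $\mathbf{int}(\Omega)\neq\emptyset$ supplies) to get $\phi\equiv 0$, and then reads off $\sum_j c_j\alpha_j^k=0$ for $k=0,\ldots,n-1$ from the derivatives at the origin, concluding via the nonvanishing Vandermonde determinant $\prod_{i<j}(\alpha_j-\alpha_i)$. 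Both are sound; what each buys is worth noting. Your route is shorter, makes the role of the distinctness of the $\alpha_j$ completely transparent in a single determinant, and actually proves a stronger statement: nonempty interior is far more than needed, since any $\Omega$ with an accumulation point suffices. The paper's route is more elementary --- it needs only one differentiation on an open interval and no complex analysis --- at the cost of a slightly fussier induction (note that after differentiating, the relation holds only on $\mathbf{int}(\Omega)$, so the induction hypothesis must be applied there; your sketch and the paper both implicitly handle this, and it is harmless since $\mathbf{int}(\Omega)$ again has nonempty interior). Your remark that the conclusion fails for $\Omega$ a set of at most $n-1$ points is also correct: the corresponding homogeneous system is underdetermined and admits a nontrivial solution.
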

\begin{proof}
We will prove the theorem by induction over $n$. The case $n = 1$ is trivial.
Start with $n=2$. If $e^{\alpha_1 \omega }$ and $e^{\alpha_2 \omega }$ are linearly dependent, then there exist $c_1, c_2\in \C$
such that
$$
c_1 e^{\alpha_1 \omega }+ c_2 e^{\alpha_2 \omega } = 0 ~~~~ \forall \omega \in \Omega
$$
which implies that
$$
c_1 e^{(\alpha_1 - \alpha_2)\omega }  = -c_2 ~~~~ \forall \omega \in \Omega.
$$
In order for the equality above to hold for more than one value of $\omega$, and since $\alpha_1 \neq \alpha_2$, we
must have 
$c_1 = c_2 = 0.$
Now let $n=k-1$ and suppose that $e^{\alpha_1 \omega }, ..., e^{\alpha_{k-1} \omega}$  are linearly independent but $e^{\alpha_1 \omega }, ..., e^{\alpha_{k} \omega}$ are linearly dependent. Then, there exist 
 $c_1, ..., c_k\in \C$ with $c_k\neq 0$ such that
$$
c_1 e^{\alpha_1 \omega }+ \cdots + c_k e^{\alpha_k \omega } = 0 ~~~~ \forall \omega \in \Omega.
$$
 Multiplying the right and left hand sides of the equality above with $e^{-\alpha_k \omega } $ gives
\begin{equation}
\label{ckeq}
c_1 e^{(\alpha_1 - \alpha_k) \omega }+ \cdots +  c_{k-1}e^{(\alpha_{k-1} - \alpha_k) \omega } + c_k = 0 ~~~~ \forall \omega \in \Omega
\end{equation}
Since $\mathbf{int} (\Omega) \neq \emptyset$, we may take the derivative of the left hand side of (\ref{ckeq}) for $\omega \in \mathbf{int} (\Omega)$.
Taking the derivative of the equality (\ref{ckeq}) gives
\begin{align*}
& (\alpha_1 - \alpha_k) c_1 e^{(\alpha_1 - \alpha_k) \omega } + \cdots +\\
& (\alpha_{k-1} - \alpha_{k}) c_{k-1}e^{(\alpha_{k-1}-\alpha_k) \omega } = 0, ~~~~ \forall \omega \in  \mathbf{int} (\Omega).
\end{align*}
Now multiply the right and left hand sides of the equality above with $e^{\alpha_k \omega } $:
\begin{align*}
& (\alpha_1 - \alpha_k) c_1 e^{\alpha_1  \omega } + \cdots + (\alpha_{k-1} - \alpha_{k}) c_{k-1}e^{\alpha_{k-1} \omega } = 0
\end{align*}
for all $\omega \in \mathbf{int} (\Omega)$. Since $e^{\alpha_1 \omega }, ..., e^{\alpha_{k-1} \omega}$ are linearly independent by the induction hypothesis, we
must have $c_1= \cdots = c_{k-1} = 0$. But this contradicts (\ref{ckeq}) since $c_k\neq 0$. We conclude that our assumption that  $e^{\alpha_1 \omega }, ..., e^{\alpha_{k} \omega}$ are linearly dependent was false, and linear independence for $n=k-1$ implies that for $n=k$. Thus, $e^{\alpha_1 \omega }, ..., e^{\alpha_{n} \omega}$ are lineary independent for any positive integer $n$, and we are done.
\end{proof}


Next, we will prove Lemma \ref{specind} by contradiction. Suppose that $h_1, ..., h_{n}$ are linearly dependent. Then, there exists
$0\neq (c_1, ..., c_n) \in \C^{n}$ such that
$$
c_1 h_1(t) + \cdots c_n h_n(t) = 0, \quad \forall  t \in \R.
$$
Taking the Fourier transform of the equation above, we get
$$
c_1 H(\omega) e^{- i\omega \tau_1} + \cdots + c_nH(\omega) e^{- i\omega \tau_n} = 0, ~~~~ \forall \omega \in \R.
$$
In particular, we have that 
$$
c_1 H(\omega) e^{- i\omega \tau_1} + \cdots + c_nH(\omega) e^{- i\omega \tau_n} = 0, 
$$
for all $\omega \in (-2\pi W, 2\pi W)$, which implies that
$$
c_1 e^{- i\omega \tau_1} + \cdots + c_n e^{- i\omega \tau_n} = 0, ~~~~ \forall  \omega \in (-2\pi W, 2\pi W).
$$
Now letting $\alpha_k = - i \tau_k$, for $k = 1, \ldots, n$, and using Proposition \ref{indexp} with $\Omega=[-2\pi W,2\pi W]$, 
we see that $e^{- i\omega \tau_k}$ are linearly independent so we must have $(c_1, \ldots, c_n)  = 0$, a contradiction.
Therefore $h_1, \ldots, h_n$  must be linearly independent, which concludes the proof.\qed 


\subsection{Proof of Proposition \ref{thm:L2}}

First, note that we may take $T=1$ without loss of generality.
Next, introduce the following two linear projection operators. Let $D:\mathcal{L}_2(\mR)\to \mathcal{L}_2(\mR)$ be the orthogonal projection onto the set of time limited function
\begin{equation*}
(D X)(t)=\begin{cases} X(t)\quad t\in [0,n]\\
0 \hspace{25pt}  \mbox{otherwise},
\end{cases}
\end{equation*}
and let $B:\mathcal{L}_2(\mR)\to \mathcal{L}_2(\mR)$ be the orthogonal projection onto the set of band limited functions with frequency support $[-W,W]$
\begin{equation*}
(BX)(t)=\int_{-\infty}^{\infty}X(\tau)\sinc(2W(t-\tau))d\tau.
\end{equation*}
We now have the following lemma.

\begin{lemma}\label{lm:range} The closure of the range of $BD$ is equal to the range of $B$, i.e., $\overline{\mbox{Range(BD)}}=\mbox{Range(B)}$.
\end{lemma}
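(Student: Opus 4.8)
The plan is to prove the two inclusions separately, concentrating the nontrivial content in a single orthogonality argument backed by analyticity. The inclusion $\overline{\mathrm{Range}(BD)}\subseteq\mathrm{Range}(B)$ is immediate: every element of $\mathrm{Range}(BD)$ lies in $\mathrm{Range}(B)$ because $B$ is applied last, and $\mathrm{Range}(B)$ is a closed subspace of $\Ltwo$ (being the space of functions with frequency support in $[-W,W]$, equivalently the range of an orthogonal projection), so it contains the closure of $\mathrm{Range}(BD)$.

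For the reverse inclusion it suffices to show that the orthogonal complement of $\overline{\mathrm{Range}(BD)}$ \emph{inside} the closed subspace $\mathrm{Range}(B)$ is trivial, since two nested closed subspaces coincide exactly when the smaller one has trivial orthogonal complement within the larger. So I would take $y\in\mathrm{Range}(B)$ with $y\perp\mathrm{Range}(BD)$, i.e.\ $\langle y, BDx\rangle = 0$ for all $x\in\Ltwo$, and aim to conclude $y=0$. The key structural fact is that both $B$ and $D$ are orthogonal projections, hence self-adjoint and idempotent. Since $y\in\mathrm{Range}(B)$ gives $By=y$, I can move the projections onto $y$:
$$0=\langle y,BDx\rangle=\langle By,Dx\rangle=\langle y,Dx\rangle=\langle Dy,x\rangle \quad\text{for all } x\in\Ltwo,$$
which forces $Dy=0$. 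By the definition of $D$ this means $y(t)=0$ for almost every $t\in[0,n]$.

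The final and decisive step uses band limitation. Because $y\in\mathrm{Range}(B)$, its Fourier transform is supported in $[-W,W]$, so by the Paley--Wiener theorem $y$ agrees almost everywhere with the restriction to $\mR$ of an entire function of exponential type. An entire function vanishing on the interval $[0,n]$ --- a set with an accumulation point --- must vanish identically by the identity theorem for analytic functions. Hence $y=0$, the orthogonal complement is trivial, and $\mathrm{Range}(B)\subseteq\overline{\mathrm{Range}(BD)}$, completing the proof.

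I expect the main obstacle to be the justification of this last step, namely the passage from ``$y$ is a band-limited $\Ltwo$ function vanishing on an interval'' to ``$y\equiv 0$''. This requires invoking the analytic (entire) extension guaranteed by Paley--Wiener, together with a careful handling of the ``almost everywhere'' versus ``pointwise'' distinction: the relation $Dy=0$ only yields vanishing a.e.\ on $[0,n]$, but the continuous analytic representative upgrades this to genuine pointwise vanishing on a full-measure subset of $[0,n]$, which still possesses accumulation points, so the identity theorem applies. Everything else --- closedness of $\mathrm{Range}(B)$, self-adjointness of $B$ and $D$, and the reduction to a trivial orthogonal complement --- is routine Hilbert-space bookkeeping.
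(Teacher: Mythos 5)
Your proof is correct and follows essentially the same route as the paper: both inclusions via closedness of $\mathrm{Range}(B)$, then self-adjointness of the projections $B$ and $D$ to reduce orthogonality to $Dy=0$, and finally band limitation to force $y\equiv 0$. The only difference is that you spell out the last step (Paley--Wiener analyticity, the identity theorem, and the a.e.\ versus pointwise issue) which the paper simply asserts with ``which is only possible if $X\equiv 0$, since $X(t)$ is band limited'' --- a worthwhile elaboration, not a different argument.
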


\begin{proof}[Proof of Lemma \ref{lm:range}]
Note that $\overline{\mbox{Range(BD)}}\subseteq \mbox{Range(B)}$ since $\mbox{Range(B)}$ is closed. 
To show that $\overline{\mbox{Range(BD)}}\supseteq \mbox{Range}(B)$, assume that $X(t)=\mbox{Range}(B) \cap \overline{\mbox{Range(BD)}}^\perp$.
By construction, we have that 
\begin{equation*}
\langle X,BDY \rangle=0 \quad \mbox{ for all } Y\in \mathcal{L}_2(\mR),
\end{equation*}
and hence 
\begin{equation}\label{eq:x_perp_dy}
\langle BX,DY \rangle=\langle X,DY \rangle=0 \quad \mbox{ for all } Y\in\mathcal{L}_2(\mR),
\end{equation}
since $B$ is an orthogonal projection and is thus self-adjoint. Now \eqref{eq:x_perp_dy} holds only if $X(t)=0$ on $t\in [0,n]$, which is only possible if $X\equiv 0$, since $X(t)$ is band limited. Therefore  $\overline{\mbox{Range(BD)}}\supseteq \mbox{Range}(B)$,
and the lemma is complete.
\end{proof}

By Lemma~\ref{lm:range} we may pick $Y\in \mathcal{L}_2(\mR)$ such that $\hat X=BD Y$ is arbitrary close to the desired band limited function. Without loss of generality such $Y$ may be chosen with support in $[0,n]$. Next, let
\[
A_k=\int_{\rho(k-1)/(2W)}^{\rho k/(2W)} Y(t)dt, \quad \mbox{ for } k=1,\ldots, m=\lfloor n/\rho\rfloor.
\]
With this construction $\sum_{k=1}^{m} A_k \delta(t-\rho\frac{k-1}{2W})\to Y$  weakly as $\rho\to 0$. Since the total variation norm is uniformly bounded $\sum_{k=1}^m |A_k|\le \|Y\|_{\mathcal{L}_1}$ we have that 
$X(t) = \sum_{k=1}^{m} A_k h_k(t)\to \hat X$ as $\rho\to 0$, and the proof is complete.\qed

\subsection*{Proof of Proposition \ref{lemma:toeplitz}}
The first statement is obvious by using $\tau_k = \tau \cdot (k-1)$ and direct calculation of \eqref{h_scalar_product}.

To prove the second statement, we consider
\begin{align*}
[H_n]_{k1} = c_{k-1} = \frac{1}{2\pi}&\int_{-\pi}^{\pi} f(z)e^{-i(k-1)z}d z  ,
\end{align*}
but we also know that
\begin{align*}
[H_n]_{k1} & =  \langle h_{k}, h_1 \rangle
= \intinf h_{k}(t)\overline{h_1(t)} d t \\
&= \frac{1}{2\pi}\intinf H(\omega)e^{-i(k-1)\tau\omega}\overline{H(\omega)} d \omega\\
& = \frac{1}{2\pi}\intinf \frac{1}{\tau}\left|H\left(\frac{z}{\tau}\right)\right|^2 e^{-i(k-1)z} dz .
\end{align*}
The integral is absolutely convergent since $h\in\Ltwo$, which allows us to do some further manipulations of that expression.
\begin{align*}
&\intinf \frac{1}{\tau}\left|H(\frac{z}{\tau})\right|^2 e^{-i(k-1)z} d z \\
&= \sum_{\ell=-\infty}^{\infty} \int_{\pi(2\ell-1)}^{\pi(2\ell+1)} \frac{1}{\tau}\left|H\left(\frac{z}{\tau}\right)\right|^2 e^{-i(k-1)z} d z\\
&= \sum_{\ell=-\infty}^{\infty} \int_{\pi}^{\pi} \frac{1}{\tau}\left|H\left(\frac{z+2\pi \ell}{\tau}\right)\right|^2 e^{-i(k-1)z-i2\pi k\ell} d z\\
&= \int_{\pi}^{\pi} e^{-i(k-1)z} \frac{1}{\tau} \sum_{\ell=-\infty}^{\infty} \left|H\left(\frac{z+2\pi \ell}{\tau}\right)\right|^2 d z .
\end{align*}
This shows that $c_{k-1}$ is the $(k-1)$:th Fourier series coefficient of the function $\frac{1}{\tau} \sum_{\ell=-\infty}^{\infty} \left|H\left(\frac{z+2\pi \ell}{\tau}\right)\right|^2 $, and the result now follows from the uniqueness of Fourier series.

Finally to show that $f\in\Lone$, we use 
Proposition \ref{prop:toeplitz_spectrum} and Proposition \ref{invgram} to get that $f(z)\geq0$.
Hence, $|f(z)|=f(z)$ and
\begin{align*}
\int_{-\pi}^{\pi} \left|f(z)\right|dz = c_0 = \intinf \frac{1}{\tau}\left|H(\frac{z}{\tau})\right|^2 d z < \infty ,
\end{align*}
since $h$ is in $\Ltwo$.
\qed

\begin{remark}
It should be noted that the associated function $f$ is in $\Lone$, and hence the formulation $f(z) = \frac{1}{\tau} \sum_{\ell=-\infty}^{\infty} \left|H\left(\frac{z+2\pi \ell}{\tau}\right)\right|^2 $ is strictly speaking only valid for $z\in[-\pi,\pi]$. However, the sum is such that it provides a periodic extension of $f$ from $[-\pi,\pi]$ to the whole of $\R$.
\end{remark}
\subsection*{Proof of Theorem \ref{prop:time_localization_precoded}}
For ease of notation, we will without loss of generality consider a symmetric transmission, that is 
\begin{align*}
X(t) = \sum_{k=-n}^{n} A_k g(t-kT') \ ,
\end{align*}
where $A$ is the vector of precoded symbols, $A=H_{2n+1}^{-\frac{1}{2}}X$.

First, we know from Proposition \ref{lemma:toeplitz} that $H_{2n+1}$ is Toeplitz with associated function 
$$f(z) = \frac{1}{T'} \sum_{\ell=-\infty}^{\infty} \left|G\left(\frac{z+2\pi \ell}{T'}\right)\right|^2.$$
For our particular choice of $T'$ we have that $G\left(\frac{z+2\pi \ell}{T'}\right) = 0$  for $\ell\neq0$  and $z\in[-\pi,\pi]$. Thus, for $z\in[-\pi,\pi]$, we have
$$f(z) = \frac{1}{T'} G\left(\frac{z}{T'}\right)^2 .$$
It's easy to see that $f(z)$ is $2\pi$-periodic and it's enough to consider the interval $z\in[-\pi,\pi]$. 
Now we will show that $\sqrt{f}$ indeed has an absolutely summable Fourier series, which will allow us to use Proposition \ref{prop:sqrt_inv_toep} in the analysis of $H_{2n+1}^{-\frac{1}{2}}$. We have that

$$\sqrt{f(z)} = \frac{1}{\sqrt{T'}}\left|G\left(\frac{z}{T'}\right)\right|, $$
and since $|G(\omega)|$ is piecewise continuously differentiable, so is $\sqrt{f}$. Thus, we know that $\sqrt{f}$ has a Fourier series whose coefficients converge absolutely \cite[Theorem 7.21]{amann} and hence belongs to the Wiener class.

Propositions \ref{prop:toeplitz_spectrum}, \ref{prop:sqrt_inv_toep}, and \ref{invgram}, together with the definitions of $G(\omega)$ and $T'$ assures us of the existence of $H_{2n+1}^{-\frac{1}{2}}$.  Now let
\begin{align*}
&H_{2n+1}^{-\frac{1}{2}} = \left(\begin{array}{c c c c c c}
c_{1,0} & \dots & c_{1,n} & \dots & c_{1,2n}\\
\vdots & \dots & \vdots & \dots & \vdots\\
c_{n+1,-n} & \dots &  c_{n+1,0} &  \dots & c_{n+1,n} \\
\vdots & \dots & \vdots & \dots & \vdots\\
c_{2n+1, -2n} & \dots &  c_{2n+1,-n} & \dots & c_{2n+1,0}\\
\end{array}\right) ^\intercal
\end{align*}
Although the notation might indicate otherwise, the matrix is actually symmetric and the reason for numbering its transpose will be apparent later. By Proposition \ref{prop:sqrt_inv_toep} the matrix is asymptotically Toeplitz, and tends towards
\begin{align*}
&H^{-\frac{1}{2}} = \left(\begin{array}{c c c c c c c }
\ddots & \ddots & \ddots & \ddots & \ddots & \ddots & \ddots\\
\ddots & c_{-1} & c_{0} &  c_{1} &  c_{2} & c_{3} & \ddots \\
\ddots & c_{-2} & c_{-1} &  c_{0} &  c_{1} & c_{2} & \ddots \\
\ddots & c_{-3} & c_{-2} &  c_{-1} &  c_{0} & c_{1} & \ddots\\
\ddots & \ddots & \ddots & \ddots & \ddots & \ddots & \ddots\\
\end{array}\right)
\end{align*}
with associate function $\hat{f}$, as $n\rightarrow\infty$. We have already established that $\sqrt{f}$ fulfils the conditions of Proposition \ref{lemma:toeplitz} and thus $\hat{f} = 1/\sqrt{f}$.
The signal $X(t)$ can now be written as 
\begin{align}\label{eqn:X(t)_vec_form}
X(t) =X^\intercal \left(H_{2n+1}^{-\frac{1}{2}}\right)^\intercal \left(\begin{array}{c}
g(t+nT')\\
g(t+(n-1)T')\\
\vdots\\
g(t)\\
\vdots\\
g(t-(n-1)T')\\
g(t-nT')
\end{array}\right) \ .
\end{align}
By inspecting \eqref{eqn:X(t)_vec_form}, we can see that each uncoded symbol $X_\ell$ will be carried by an effective pulse $\xi_\ell^{(2n+1)}(t)$ such that
\begin{align*}
X(t) = \sum_{\ell=-n}^n X_\ell \, \xi_\ell^{(2n+1)}(t).
\end{align*}
The pulse $\xi_\ell^{(2n+1)}(t)$ is given by
\begin{align*}
\xi_l^{(2n+1)}(t)=\sum_{k=-n}^{n} c_{l+n+1,k-l} \ g(t-kT') ,
\end{align*}
or equivalently
\begin{align*}
\xi_l^{(2n+1)}(t)=\sum_{k=-n-l}^{n-l} c_{l+n+1,k} \ g(t-(k-l)T') .
\end{align*}
Taking the Fourier transform of $\xi_l^{(2n+1)}(t)$ gives
\begin{align*}
\Xi_\ell^{(2n+1)}(\omega)=G(\omega) e^{i\ell\omega T'}  \sum_{k=-n-l}^{n-l} c_{n+1,k} \ e^{-ik\omega T'}  .
\end{align*}
Taking the limit $n\rightarrow\infty$, we get 
\begin{align*}
\Xi_\ell^{(2n+1)}(\omega) \xrightarrow[n\rightarrow\infty]{ } \Xi_\ell(\omega),
\end{align*}
where
\begin{align*}
\Xi_\ell(\omega) & = G(\omega) e^{i\ell\omega T'} \sum_{k=-\infty}^{\infty} c_{k} e^{-ik\omega T'}\\
&= G(\omega) e^{i\ell\omega T'}  \hat{f}(\omega T') \\
& = \frac{G(\omega)e^{i\ell\omega T'} }{\frac{1}{\sqrt{T'}}G\left(\omega\right)} .
\end{align*}
We conclude that the spectrum of $\xi_l^{(2n+1)}(t)$ as $n$ goes to infinity is given by
\begin{align*}
 \Xi_0(\omega)
&= \left\{ \begin{array}{c l}
\sqrt{T'}e^{i\ell\omega T'}  & \left|\omega\right|\leq 2\pi W'\\
0 & \left|\omega\right|> 2\pi W',
\end{array}\right.
\end{align*}
 of which the envelope is the spectrum of a $\sinc$ pulse with linear bandwidth $2W'$. Since the pulses are transmitted with a time spacing of $T' = \frac{1}{2W'}$ seconds and since we know that a signal consisting of $\sinc$ pulses of bandwidth $2W'$ and time shift spacing $T' = \frac{1}{2W'}$ are well localized in time, we conclude that the signal $X(t)$ is well localized in time.
 \qed

\subsection*{Proof of Lemma \ref{lemma:toeplitz_spectrum_rtrc}}
The lemma is proved by direct calculations. The assumption $\rho\leq1$ is a technical one which ensures $(1-\beta)\rho\geq2-(1+\beta)\rho$, and with $\beta\in[0,\,1]$ it also implies $(1+\beta)\rho\leq2$.

We have that the Gramian for the root-raised-cosine pulses is simply given by a raised cosine with roll-off $\beta$ and that is sampled at integer instances and with $T=1/\rho$; namely
\begin{align*}
\begin{aligned}[]
[H_n]_{m\ell} =\rho\sinc\left(\rho(m-\ell)\right)\frac{\cos\left(\pi\beta\rho(m-\ell)\right)}{1-4\beta^2\rho^2(m-\ell)^2}.
\end{aligned}
\end{align*}
 This can in turn be rewritten as
\begin{align*}
\begin{aligned}[]
&[H_{n}]_{m\ell}=\\
&\frac{(1+\beta)}{2} \cdot \frac{\rho}{1 - 4\beta^2\rho^2\left(m-\ell\right)^2}  \cdot \sinc((1+\beta)\rho(m-\ell)) \\
&+ \frac{(1 -\beta)}{2}\cdot \frac{\rho}{1 - 4\beta^2\rho^2\left(m-\ell\right)^2 } \cdot \sinc((1 - \beta)\rho(m-\ell)).
\end{aligned}
\end{align*}
We will use one of Parseval's identities for periodic convolution, namely
\begin{align}\label{eqn:parseval}
f(z)
= \sum_{k=-\infty}^{\infty}c_k^1 c_k^2 e^{ikz} 
= \frac{1}{2\pi}\int_{-\pi}^{\pi} g_1(t)g_2(z-t) \dd t.
\end{align}
Thus we identify the sought associate function $f$ as the convolution of the two functions $g_1(z)$ and $g_2(z)$ with the corresponding Fourier series coefficients:
\begin{align}\label{eqn:c1_rtrc}
\begin{aligned}
c_k^1 = &
\frac{(1+\beta)\rho}{2}\cdot \sinc((1+\beta)\rho k) + \\
&\frac{(1 -\beta)\rho}{2}\cdot  \sinc((1 - \beta)\rho k)
\end{aligned}
\end{align}
and
\begin{align}\label{eqn:c2_rtrc}
c_k^2 = 
\frac{-1}{4\beta^2\rho^2k^2-1}  .
\end{align}

We can then based on \eqref{eqn:c1_rtrc} conclude that $g_1$ is a sum of rectangular functions, periodic on the interval $[-\pi,\, \pi]$ and looks like:
\begin{align}\label{eqn:g1_rtrc}
g_1(z) = \left\{
\begin{array}{l l}
\eqref{eqn:g1_rtrc_leq} & \text{ if } (1+\beta)\rho<1\\
\eqref{eqn:g1_rtrc_geq} & \text{ if } (1+\beta)\rho\geq1  ,\\
\end{array} \right.
\end{align}
with
\begin{align}\label{eqn:g1_rtrc_leq}
\frac{1}{2} \rect\left(\frac{z}{2(1+\beta)\rho\pi}\right)
+ \frac{1}{2}\rect\left(\frac{z}{2(1-\beta)\rho\pi}\right)  ,
\end{align}
and
\begin{align}\label{eqn:g1_rtrc_geq}
\begin{aligned}
&1 -\frac{1}{2} \rect\left(\frac{z}{2(2-(1+\beta)\rho)\pi}\right)\\
&+\frac{1}{2} \rect\left(\frac{z}{2(1-\beta)\rho\pi}\right)  .
\end{aligned}
\end{align}
The function $g_1$ is hence different depending on the value of the parameter combination $(1+\beta)\rho$.

Evaluating $g_2(z)$ we try to find the Fourier series of the following function
\begin{align}
\tilde{g}_2(z) = a \cdot \sin\left( \frac{|z|-d}{2b} \right) \qquad z\in [-\pi,\, \pi]  ,
\end{align}
and find that it has the following coefficients:
\begin{align}\label{eqn:c2*_rtrc}
\begin{aligned}
\tilde{c}^{2}_k &= \frac{-1}{4b^2k^2-1} \\
&\cdot\left(\frac{2ab}{\pi}\cos\left(\frac{d}{2b}\right) + \frac{2ab(-1)^n}{\pi} \cos\left( \frac{\pi-d}{2b} \right) \right)  .
\end{aligned}
\end{align}
Comparing $\tilde{c}^{2}$ with $c^2$, that is \eqref{eqn:c2*_rtrc} with \eqref{eqn:c2_rtrc}, we can identify that in our case we have $b=\beta\rho$, $d=(1-\beta\rho)\pi$, and $a=\frac{\pi}{2\beta\rho}\cdot\frac{1}{\cos((1-\beta\rho)\pi/(2\beta\rho))}$. Making some rearrangements we arrive at
\begin{align}\label{eqn:g2_rtrc}
 g_2(z) = \frac{\pi}{2\beta\rho}\cdot \frac{1}{\sin\left(\frac{\pi}{2\beta\rho}\right)} \cdot \cos\left(\frac{|z|-\pi}{2\beta\rho}\right)  .
\end{align}

Now we can use Parseval's identity, \eqref{eqn:parseval}, to retrieve $f$.
Getting a closed form expression for the convolution is made easier since $g_1$ is constant over intervals. However since $g_1$ changes depending on the value of $(1+\beta)\rho$ and $g_2$ contains and absolute value $|z-t|$ where $t$ is the integration variable the problem grows combinatorially. As an intermediate step we solve
\begin{align}\label{eqn:temp_integral}
&\int_{a}^{b} \cos\left(\frac{|z-t|-\pi}{2\beta\rho}\right)  \dd t =
\left\{\begin{aligned} \eqref{eqn:integral_part1} \quad &z\geq b\\
\eqref{eqn:integral_part2} \quad &  a<z<b\\
\eqref{eqn:integral_part3} \quad & z\leq a
\end{aligned}
\right.  .
\end{align}
With
\begin{align}\label{eqn:integral_part1}
2\beta\rho\left(  -\sin\left( \frac{z-\pi-b}{2\beta\rho}\right)+\sin\left( \frac{z-\pi-a}{2\beta\rho}\right)  \right)
\end{align}
\begin{align}\label{eqn:integral_part2}
\begin{aligned}
2\beta\rho\Bigg(  \sin\left( \frac{-z-\pi+b}{2\beta\rho}\right)+\sin\left( \frac{z-\pi-a}{2\beta\rho}\right) \\
+2\sin\left( \frac{\pi}{2\beta\rho}\right)  \, \Bigg) 
\end{aligned}
\end{align}
\begin{align}\label{eqn:integral_part3}
2\beta\rho\left(  \sin\left( \frac{-z-\pi+b}{2\beta\rho}\right)-\sin\left( \frac{-z-\pi+a}{2\beta\rho}\right)  \right).
\end{align}
Combining the results \eqref{eqn:g1_rtrc}, \eqref{eqn:g2_rtrc} and \eqref{eqn:temp_integral}, and doing some simplifications to the expressions gives the sought associate function found in \eqref{eqn:f_rtrc}. \qed

\end{document}